\setlist[enumerate]{leftmargin=.5in}
\setlist[itemize]{leftmargin=.5in}
\title{Optimal Transport Divergences induced by Scoring Functions\thanks{This version: 10. April 2024; first version: 20. November 2023.
The authors thank Tobias Fissler, Cale Rankin, Ludger Rüschendorf, and Ting-Kam Leonard Wong for stimulating discussions on the topic. \funding{SP gratefully acknowledges the support of the Natural Sciences and Engineering Research Council of Canada (NSERC) with funding reference numbers DGECR-2020-00333 and RGPIN-2020-04289. SV thanks Research Foundation Flanders (FWO) for financial support with funding numbers  FWO SBO S006721N and FWO WOG W001021N.}}}
\author{Silvana M. Pesenti\thanks{Department of Statistical Sciences, University of Toronto, Canada
  (\email{silvana.pesenti@utoronto.ca}).}
\and Steven Vanduffel
\thanks{Department of Economics and Political
Science, Vrije Universiteit Brussel, Belgium
  (\email{Steven.Vanduffel@vub.be}).
  }
}
\newcommand*{\addFileDependency}[1]{
  \typeout{(#1)}
  \@addtofilelist{#1}
  \IfFileExists{#1}{}{\typeout{No file #1.}}
}
\crefname{hypothesis}{Hypothesis}{Hypotheses}
\newcommand{\A}{{\mathsf{A}}}
\newcommand{\R}{{\mathds{R}}}
\newcommand{\N}{{\mathds{N}}}
\newcommand{\Id}{{\mathds{1}}}
\newcommand{\E}{{\mathbb{E}}}
\renewcommand{\P}{{\mathbb{P}}}
\newcommand{\Q}{{\mathbb{Q}}}
\newcommand{\F}{{\mathcal{F}}}
\renewcommand{\L}{{\mathcal{L}}}
\newcommand{\M}{{\mathcal{M}}}
\newcommand{\bM}{\bar{{\mathcal{M}}}}
\newcommand{\Mo}{{\mathcal{M}^0}}
\newcommand{\Minv}{{\breve{\mathcal{M}}}}
\newcommand{\bz}{{\boldsymbol{z}}}
\newcommand{\DB}{{\mathscr{B}}}
\newcommand{\DS}{{\mathscr{S}}}
\newcommand{\VaR}{{\textrm{VaR}}}
\newcommand{\ep}{{\varepsilon}}
\renewcommand{\d}{{\mathrm{d}}}
\DeclareMathOperator*{\argmin}{argmin}
\DeclareMathOperator*{\argmax}{argmax}
\renewcommand{\Finv}{{\Breve{F}}}
\newcommand{\Ginv}{{\Breve{G}}}
\newcommand{\Fxiinv}
{{\Breve{F}_{\xi}}}
\begin{document}

\maketitle
\begin{abstract}
We employ scoring functions, used in statistics for eliciting risk functionals, as cost functions in the Monge-Kantorovich (MK) optimal transport problem. This gives raise to a rich variety of novel asymmetric MK divergences, which subsume the family of Bregman-Wasserstein divergences. We show that for distributions on the real line, the comonotonic coupling is optimal for the majority of the new divergences. Specifically, we derive the optimal coupling of the MK divergences induced by functionals including the mean, generalised quantiles, expectiles, and shortfall measures. Furthermore, we show that while any elicitable law-invariant coherent risk measure gives raise to infinitely many MK divergences, the comonotonic coupling is simultaneously optimal. \\
The novel MK divergences, which can be efficiently calculated, open an array of  applications in robust stochastic optimisation.  We derive sharp bounds on distortion risk measures under a Bregman-Wasserstein divergence constraint, and solve for cost-efficient payoffs under benchmark constraints.  
\end{abstract}

\begin{keywords}
Optimal transport, risk measures, scoring functions, elicitability, coupling, Wasserstein distance, asymmetric Optimal Transport
\end{keywords}


\section{Introduction}

This work connects optimal transport with the statistics of point forecasts and risk measure theory to derive asymmetric optimal transport divergences. Initially developed by Gaspard Monge in the 18-th century to establish the most efficient way of moving piles of soil from one location to another, optimal transport (OT) theory has evolved as a flourishing mathematical discipline that studies the optimal transportation of one distribution to another; see e.g. the monographs \cite {rachev1998mass} and \cite{Santambrogio2015book}. The theory of OT has numerous applications in various distinct disciplines including economics, engineering, and the natural sciences. For instance, economists use OT to model the flow of goods and resources in financial markets. It find itself in causal inference, partial identification, and the analysis of distributional treatment effects in e.g., biostatistics. In image processing and computer vision, OT is used for solving matching and registration problems whereas in the domain of fluid dynamics it optimises the flow of gases.

The most commonly used and best understood OT distance is the (2-)Wasserstein distance, that arises as the minimiser of the Monge-Kantorovich OT problem with the symmetric cost function $c(z_1,z_2) = (z_1-z_2)^2$. In various  problems of interest, however, asymmetry may be desired. A case in point is the quantification of model ambiguity in optimal portfolio strategies, where a decision maker assigns a larger costs to potential losses than to equally large gains (if any). The literature on asymmetric cost function is scarce, with the exception of the recently introduced Bregman-Wasserstein (BW) divergences \cite{carlier2007monge}. A BW divergence is the minimiser of the Monge-Kantorovich OT problem in which the cost function is a Bregman divergence generated by a strictly convex function $\phi(\cdot)$; we refer to \cite{Rankin2023WP} for discussions and information geometric interpretations. The BW divergence is an asymmetric generalisation of the Wasserstein distance, recovered for $\phi(x)=x^2$, that allows the modelling of dissimilarities.

The study of elicitability is a fast growing field in statistics and at its core are scoring functions that incentivise truthful predictions and allow for forecast comparison, model comparison (backtesting), and model calibration \cite{Gneiting2011,Fissler2021EJS}. In sensitivity analysis, scoring functions are utilised for defining sensitivity measures which quantify the sensitivity of an elicitable risk measure to perturbations in the model’s input factors \cite{Fissler2023EJOR}. The most well-known family of scoring functions are the Bregman divergences that elicit the mean, where a functional is called elicitable if it is a minimiser of an expected score, see \Cref{def:elicitable}. Other elicitable functionals are quantiles, expectiles, and shortfall risk measures; tools used in risk management. Scoring functions are by nature asymmetric, making them ideal candidates for asymmetric cost functions in the Monge-Kantorovich OT problem. Indeed, we propose novel asymmetric Monge-Kantorovich (MK) divergences where the OT cost functions are statistical scoring functions. As a Bregman divergence elicits the mean and gives raise to a BW divergence, our new MK divergences can be seen as generalisations of BW divergences, and thus the Wasserstein distance. In addition to scoring functions that elicits the mean, we study scoring functions that elicit the quantile, the expectile, and law-invariant convex risk measures. Interestingly, we find that most of the introduced MK divergences are attained by the comonotonic coupling. Furthermore, as an elicitable functional possesses infinitely many scoring functions, and thus gives raise to infinitely many MK divergences, the comonotonic optimal coupling is typically simultaneously optimal. Using the celebrated Osband's principle in statistics, we propose ways to create novel MK divergences that are attained by the anti- or comonotonic  coupling. Furthermore, we prove that MK divergences induces by any law-invariant elicitable coherent risk measure are attained by the comonotonic coupling. Finally, we provide two applications to robust stochastic optimisation. First, we derive sharp bounds on distortion risk measures when admissible distributions belong to a BW-ball around a reference distribution, thus  significantly generalising recent results of \cite{bernard2022robust}, who solve this problem for the special case of a Wasserstein ball. Second,  we find the cheapest payoff (reflecting terminal wealth) under the constraint that its distribution lies within a BW-ball around a benchmark distribution.

This paper is organised as follows. \Cref{sec:MK-divergence} introduces the MK divergences after reviewing the statistical concepts elicitability and scoring functions and the relevant topics in OT. \Cref{sec:OT-maps} is devoted to MK divergences induced by elicitable risk functionals such as the quantile, expectile, and shortfall risk measure. We find that for distributions on the real line the majority of the new MK divergences are attained by the comonotonic coupling. Applications of the new divergences to risk measure bounds, significantly generalising recent results by \cite{bernard2022robust}, and portfolio management are provided in \Cref{sec:app}.

\section{Monge-Kantorovich divergences induced by scoring functions}\label{sec:MK-divergence}

\subsection{Elicitability and scoring functions}
We first review the statistical concepts of elicitability and scoring functions by following the traditional statistical notation and decision theoretic setup, see e.g., \cite{Gneiting2011} and \cite{Fissler2021EJS}. For this let $(\Omega, \F, \P)$ be a complete probability space and denote by $\L:= \L(\Omega, \F, \R)$ the space of all random variables. The cumulative distribution function (cdf) of a random variable $X \in \L$ is denoted by $F_X(\cdot):= \P(X\le \cdot)$, and we write $\Mo: = \Mo(\R)$ to denote the space of all cdfs on $\R$. For a cdf $F\in\Mo$, we define its corresponding (left-continuous) quantile function by $\Finv(u):= \inf\{y \in \R ~|~ F(y) \ge u\}$, $u \in [0,1]$, with the convention that $\inf\emptyset = + \infty$. Throughout, we will use the notation $\M \subseteq \bM \subseteq \Mo$ to denote sub-classes of cdfs.

\begin{definition}[Scoring function]
A scoring function (or score) is a measurable map $S\colon\A\times\R\to[0,\infty]$, where $\A \subseteq \R$ is called the action domain. For a given functional $T\colon\bM\to \A$ and a sub-class $\M\subseteq \bM$, the scoring function $S$ may satisfy the following properties:
\begin{enumerate}[label = $(\roman*)$]
    \item 
    $S$ is \emph{$\M$-consistent} for $T$, if
    for all $F\in\M$ and for all $z\in\A$
    \begin{equation}
    \label{eq:consistency}
    \int S\big(T(F),y\big)\,\d F(y) \;\le\; \int S(z,y)\,\d F(y)\,.
    \end{equation}

    \item
    $S$ is \emph{strictly} $\M$-consistent for $T$, if it is $\M$-consistent for $T$ and if the inequality in \eqref{eq:consistency} is strict for all $z\neq T(F)$.   
\end{enumerate}
\end{definition}

Throughout, we make the following non-restrictive assumptions on the considered scoring functions.
\begin{assumption}[Normalisation of scores]\label{asm:score}
    Let $S$ be an $\M$-consistent score for $T$ and denote by $\delta_y$, $y \in \R$, point measures. Then it holds that 
    \begin{enumerate}[label = $(\roman*)$]
        \item \label{asm:eq:dirac}
        $S\big(T(\delta_y), y\big) < S(z,y) $ for all $z \neq T(\delta_y)$ and $y \in \R$, and 
        \item \label{asm:eq:normalised}
        $S\big(T(\delta_y), y\big) = 0 $ for all $y\in\R$.        
    \end{enumerate}
\end{assumption}
\cref{asm:score} means that the scores are strictly consistent on the space of point measures and that the scores are normalised to $S\big(T(\delta_y), y\big) = 0$. Note that any score $S$ satisfying \ref{asm:eq:dirac} can be normalised to fulfil \ref{asm:eq:normalised}, by setting $\tilde{S}(z,y):= S(z,y) - S(T(\delta_y), y)$. Moreover, \cref{asm:score} leads to a unique characterisation of the families of scores of elicitable functionals, see e.g., Propositions \ref{prop:mean}, \ref{prop:quantile}, and \ref{prop:expectiles}.

\begin{definition}[Elicitability]\label{def:elicitable}
A functional $T \colon \bar{\M} \to \A$ is \textit{1-elicitable}
\footnote{
    A functional $T$ is called $k$-elicitable, $k \in \N$, if there exists a strictly $\M$-consistent scoring function $S\colon \A^k \times \R \to \R$, $\A^k \subseteq \R^k$ and a $\bz^* \in \R^{k-1}$, such that 
    $(\bz^*, T(F))= \argmin_{\bz \in \A^k} \int S(\bz, y)\d F(y)$, for all $F \in \M$. There are many functionals that are $k$-elicitable but not $1$-elicitable, see e.g., \cite{Fissler2016AS}.
    } 
on $\M\subseteq \bM$, if there exists a strictly $\M$-consistent scoring function $S$ for $T$.
Moreover, the functional $T$ has the following representation on $\M$
\begin{equation}\label{eq:elicitable-Bayes}
    T(F)= \argmin_{z \in \A} \int S(z, y)\,\d F(y)\,,
    \quad \forall\; F \in \M\,.
\end{equation}
\end{definition}
By Equation \eqref{eq:elicitable-Bayes}, a 1-elicitable functional is a Bayes act and a minimiser of an expected score \cite{Gneiting2011}. It is well known that the squared loss $S(z,y)=(z-y)^2$ elicits the mean. The squared Euclidean distance, however, is not the only strictly consistent score for the mean. Indeed, from \eqref{eq:elicitable-Bayes} we see that a 1-elicitable functional $T$ has infinitely many strictly consistent scores. In particular, the class of  strictly consistent scoring functions for the mean are given by the so-called Bregman divergences, which we recall next. 

\begin{definition}[Bregman divergence]
Let $\phi\colon \R \to \R$ be a convex function. The Bregman divergence associated with $\phi$ is defined as
\begin{equation*}
    B_\phi\big(z_1, z_2\big)
    := \phi(z_1) - \phi(z_2) - \phi'(z_2) (z_1-z_2)
    \,,\quad z_1,z_2\in\R\,,
\end{equation*}
where $\phi'(z):= \frac{d}{dz} \phi(z)$ denotes the derivative of $\phi$.
\label{Bregman}
\end{definition}
A Bregman divergence $B_\phi(z_1, z_2)$ can be seen as a measure of the deviation of $z_2$ from $z_1$. Note that in order to have a mathematical divergence (i.e., $B_\phi(z_1, z_2)=0$ if and only if $z_1=z_2$), $\phi$ needs to be strictly convex. While for the choice $\phi(z)=z^2$ the Bregman divergence coincides with the squared Euclidean distance, i.e., $B_{\phi}(z_1, z_2) = (z_1- z_2)^2$, in general, the Bregman divergence is not symmetric.

\subsection{Monge-Kantorovich divergences induced by scoring functions}

Next, we use scoring functions as cost functions in the Monge-Kantorovich optimal transport (OT) problem. In what follows, we call a function $c\colon \R^2\to \R_+$, $\R_+:=[0, \infty)$, that is lower-semi-continuous, a cost function. We recall the traditional Monge-Kantorovich optimisation problem and refer the reader to the book \cite{Santambrogio2015book} for further details.

\begin{definition}[Monge-Kantorovich optimal transport problem]
Let $c\colon \R^2 \to \R_+$ be a cost function. Then the Monge-Kantorovich optimisation problem with respect to the cdfs $F_1\in\Mo$ and $F_2\in\Mo$ is given by
\begin{equation}\label{eq:MK-opt}
    \inf_{\pi\in\Pi(F_1,\,F_2)} \;\left\{\,\int_{\R^2} c\big(z_1, z_2\big)\,\pi(\d z_1,\d z_2)\, \right\},
\end{equation}%
where $\Pi(F_1,F_2)$ denotes the set of all bivariate cdfs with marginal cdfs $F_1$ and $F_2$, respectively. A bivariate cdf that attains the infimum in \eqref{eq:MK-opt} is called an optimal coupling, which exists for any choice of cost function, see e.g., Theorem 1.7 in \cite{Santambrogio2015book}.
\end{definition}

For the cost function $c(z_1, z_2) := (z_1 - z_2)^p$, $p \ge 1$, we obtains the well-known $p$-Wasserstein distance 
\begin{align*}
    W_p (F_1, F_2)
    :&=    \inf_{\pi\in\Pi(F_1,\,F_2)} \;\left\{ \left(\,\int_{\R^2} (z_1 -  z_2)^p\,\pi(\d z_1,\d z_2)\, \right)^\frac{1}{p}\right\}
    \\
    &=
    \left(\int_0^1 \left|\Finv_1(u) - \Finv_2(u) \right|^p \d u\right)^{\frac1p}\,,
\end{align*}%
where $\Finv_i$ is the quantile function of $F_i$, $i = 1,2$, and where the last equality follows for cdfs on the real line, indicating that the comonotonic coupling is optimal \cite{dall1956SNS}.

This work introduces new asymmetric \emph{Monge-Kantorovich (MK) divergences} -- divergences on the space of cdfs -- that are defined by optimisation problem \eqref{eq:MK-opt}, where the cost functions are scoring functions. Thus, we not only introduce new MK divergences but also provide a novel perspective on scoring functions as OT cost functions.
Specifically, we consider cost functions $c(z_1, z_2): = S(z_2, z_1)$, for consistent scoring functions $S$. Note that the arguments of $c$ and $S$ are exchanged, this is due to different notational conventions in statistics and optimal transport, see e.g. Equation \eqref{eq:Bregman}. Our choice is justified in that we obtain the Bregman-Wasserstein divergence when choosing  any consistent scoring functions for the mean functional, see e.g., \cite{Rankin2023WP}.

We formulate the following definition. 
\begin{definition}[Monge-Kantorovich divergence]
Let $S$ be a $\M$-consistent scoring function for a functional $T$. Then the Monge-Kantorovich (MK) divergence induced by $S$ from the cdf $F_1\in \Mo$ to the cdf $F_2 \in \Mo$ is given by
\begin{equation}\label{eq:MK-divergence}
    \DS(F_1, F_2)
    :=
    \inf_{\pi\in\Pi(F_1,\,F_2)} \;\left\{\,\int_{\R^2} S\big(z_2, z_1\big)\,\pi(\d z_1,\d z_2)\, \right\}\,.
\end{equation}%
We term a bivariate cdf that attains the infimum in \eqref{eq:MK-divergence} an optimal coupling.
\end{definition}
The defined MK divergences may not necessarily be a divergence. They are non-negative and satisfy $\DS(F_1, F_1) = 0$, however, additional assumptions on the score $S$ are needed to ensure that $F_1=F_2$ implies $\DS(F_1, F_1)=0$, see e.g., \cite{Rankin2023WP} for the BW divergence. Furthermore, a MK divergence is in general not symmetric which is in contrast to, e.g., the $p$-Wasserstein distance. Clearly, the MK divergence depends on the choice of scoring function, however, for conciseness of the exposition, we refrain from writing $\DS_S$, whenever the scoring function is clear from the context.
The assumption to normalise the score, \cref{asm:score} \ref{asm:eq:normalised}, does not affect the optimal coupling as normalisation is achieved by subtracting from the score a function of $z_2$ only; see also the discussion after \cref{asm:score}.

If the cost function is the Bregman divergence, i.e., when in \eqref{eq:MK-opt} one considers $c(z_1, z_2) = B_\phi(z_1, z_2)$, we obtain the Bregman-Wasserstein (BW) divergence \cite{Rankin2023WP}
\begin{equation}\label{eq:BW}
    \DB_\phi (F_1, F_2)
    :=    \inf_{\pi\in\Pi(F_1,\,F_2)} \;\left\{ \,\int_{\R^2} B_\phi (z_1, \, z_2)\,\pi(\d z_1,\d z_2)\, \right\}\,,
\end{equation}
which reduces to the 2-Wasserstein distance for $B_\phi$ being the squared loss, i.e. $\phi(x) = x^2$.

\section{Optimal couplings for MK divergences}\label{sec:OT-maps}
This section is devoted to MK divergences induced by scores of elicitable risk functionals and their optimal couplings. Any elicitable risk functional, such as the mean, quantiles, expectiles, admits infinitely many consistent score, and thus gives raise to an infinite family of MK divergences. Furthermore, while the MK divergences are different for each score that elicits the risk functional, we show that the optimal couplings are the same.

\subsection{Bregman score as cost function}

The most commonly used 1-elicitable functional is the mean, whose family of scoring functions are the Bregman scores.

\begin{proposition}[Elicitability of Mean -- \cite{Gneiting2011}]\label{prop:mean}
    Let $\M$ denote the class of cdfs with finite mean. If $\phi$ is a (strictly) convex function with subgradient $\phi'$ and if 
$\int |\phi(y)|\,\mathrm{d}F(y) <\infty$ for all $F\in\M$, then the scoring function
\begin{equation}
    \label{eq:Bregman}
    S_{\phi}(z,y) := B_\phi(y,z)
    \,,\quad z,y\in\R\,,
\end{equation}
is (strictly) $\M$-consistent for the mean. Moreover, on the class of compactly supported measures, any (strictly) consistent score for the mean which is continuously differentiable in its first argument and which satisfies $S(y,y)=0$, is necessarily of the form \eqref{eq:Bregman}.
\end{proposition}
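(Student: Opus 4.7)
The plan is to prove the two assertions separately: first that $S_\phi(z,y)=B_\phi(y,z)$ is (strictly) $\M$-consistent for the mean, and second the Osband-type converse characterising every sufficiently smooth normalised consistent score.

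For the direct half I would simply compute, for arbitrary $F\in\M$ with mean $\mu_F := \int y\,\d F(y)$ and arbitrary $z\in\R$,
\begin{equation*}
\int B_\phi(y,z)\,\d F(y)-\int B_\phi(y,\mu_F)\,\d F(y)
=\phi(\mu_F)-\phi(z)-\phi'(z)(\mu_F-z)=B_\phi(\mu_F,z),
\end{equation*}
where the cross terms linear in $y$ collapse because $\int(y-\mu_F)\,\d F(y)=0$. Non-negativity of the Bregman divergence gives consistency, and if $\phi$ is strictly convex the right-hand side is strictly positive for $z\ne\mu_F$, giving strict consistency. The integrability assumption on $\phi$ ensures the expected scores are finite.

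For the converse I would invoke Osband's principle. Restricting to the class of finite convex combinations of point masses, strict consistency and differentiability in $z$ give the first-order condition $\int\partial_z S(z,y)\,\d F(y)\big|_{z=\mu_F}=0$. Applied to the two-point measure $F=\lambda\delta_{y_1}+(1-\lambda)\delta_{y_2}$ with $\mu=\lambda y_1+(1-\lambda)y_2$ and using $\lambda=(y_2-\mu)/(y_2-y_1)$, this reads
\begin{equation*}
(y_2-\mu)\,\partial_z S(\mu,y_1)+(\mu-y_1)\,\partial_z S(\mu,y_2)=0,
\end{equation*}
so $\partial_z S(\mu,y)/(y-\mu)$ is independent of $y$; call its common value $-h(\mu)$. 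Hence $\partial_z S(z,y)=h(z)(z-y)$ for all $z,y$ (compact support lets $\mu$ range over any point). Setting $\phi$ to be an antiderivative of an antiderivative of $h$, one checks that $B_\phi(y,z)$ satisfies the same ODE in $z$ and vanishes at $z=y$; combined with the normalisation $S(y,y)=0$ this yields $S(z,y)=B_\phi(y,z)$. Finally, evaluating the second $z$-derivative at $z=\mu_F$ gives $\int[h'(\mu_F)(\mu_F-y)+h(\mu_F)]\,\d F(y)=h(\mu_F)$, which must be non-negative (positive, in the strict case) by the minimisation property; thus $\phi''=h\ge 0$ (resp.\ $>0$), which is exactly (strict) convexity of $\phi$.

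The main obstacle is the converse: one must justify that the class of finite point-mass mixtures is rich enough to force the separation-of-variables identity pointwise in $(z,y)$, and one must handle the regularity carefully — $h$ is a priori only measurable, but the $C^1$ assumption in $z$ ensures it is continuous, so the constructed $\phi$ inherits enough regularity for the Bregman representation to make literal sense. The direct half and the final convexity check are routine once the pointwise form of $\partial_z S$ has been pinned down.
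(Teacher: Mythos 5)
The paper states this proposition as a citation to Gneiting (2011) and supplies no proof of its own, so there is nothing to compare against in the paper itself. Your reconstruction follows the standard Savage/Osband argument that underlies the cited result, and the direct half is exactly right: collapsing the terms linear in $y$ leaves $\int B_\phi(y,z)\,\d F(y)-\int B_\phi(y,\mu_F)\,\d F(y)=B_\phi(\mu_F,z)\ge 0$, with strictness under strict convexity.

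One point in the converse deserves tightening. You deduce from two-point mixtures that $\partial_z S(z,y)=h(z)(z-y)$, with $h$ continuous because $S\in C^1$ in $z$; so far so good. But your final step establishes $h\ge 0$ by evaluating $\partial_z^2 S$ at $z=\mu_F$, which presupposes that $h$ is differentiable — a regularity you do not have (only continuity of $\partial_z S$ is assumed, so $h=\phi''$ need not be $C^1$). The conclusion is still correct, but it is cleaner to avoid second derivatives entirely: from $\Psi(z):=\int S(z,y)\,\d F(y)$ you already have $\Psi'(z)=h(z)(z-\mu_F)$, and for $\mu_F$ to minimise $\Psi$ you need $\Psi'(z)\le 0$ on $z<\mu_F$ and $\Psi'(z)\ge 0$ on $z>\mu_F$; both force $h(z)\ge 0$ for $z\neq\mu_F$, hence everywhere by continuity, and $h>0$ under strict consistency. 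Letting $\mu_F$ range over the (arbitrary compact) support then gives $\phi''=h\ge 0$ globally. With that adjustment the argument is complete and matches what the cited reference proves.
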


\begin{theorem}[Optimal coupling for BW-divergence]
\label{optimal-BW}
    The optimal coupling of any BW-divergence is the comonotonic coupling, i.e., $\big(\Finv_1(U), \Finv_2(U)\big)$, for any $U \sim U(0,1)$, is the optimal coupling.
\end{theorem}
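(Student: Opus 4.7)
The key observation is that the Bregman cost admits an algebraic decomposition that isolates a single coupling-dependent term. The plan is to write
\begin{equation*}
    B_\phi(z_1,z_2) \;=\; \phi(z_1) \;-\; \phi(z_2) \;-\; \phi'(z_2)\,z_1 \;+\; \phi'(z_2)\,z_2 \,,
\end{equation*}
and then, for any $\pi\in\Pi(F_1,F_2)$, integrate term by term. The first term integrates to $\int \phi(z_1)\,\d F_1(z_1)$, the second to $-\int \phi(z_2)\,\d F_2(z_2)$, and the fourth to $\int \phi'(z_2)\,z_2\,\d F_2(z_2)$; all three are functionals of the marginals alone. Consequently
\begin{equation*}
    \DB_\phi(F_1,F_2) \;=\; K(F_1,F_2) \;-\; \sup_{\pi\in\Pi(F_1,F_2)} \int_{\R^2} \phi'(z_2)\,z_1\,\pi(\d z_1,\d z_2)\,,
\end{equation*}
where $K(F_1,F_2)$ lumps the marginal-only contributions and does not influence the optimal coupling.

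The reduction now puts the problem into the classical framework of Hardy--Littlewood / Fréchet--Hoeffding. Since $\phi$ is convex, its derivative $\phi'$ is non-decreasing. Hence $\phi'(Z_2)$ is a non-decreasing transform of $Z_2$, and for $(Z_1,Z_2)\sim\pi$ with marginals $F_1,F_2$, the pair $(Z_1,\phi'(Z_2))$ has marginals $F_1$ and $\tilde F_2 := \phi'\circ\Finv_2$ (acting as a distribution on $\R$). The Fréchet--Hoeffding upper bound for covariances gives
\begin{equation*}
    \E_\pi\!\left[Z_1\,\phi'(Z_2)\right] \;\le\; \int_0^1 \Finv_1(u)\,\phi'\!\big(\Finv_2(u)\big)\,\d u\,,
\end{equation*}
with equality attained by the comonotonic coupling $(\Finv_1(U),\Finv_2(U))$ with $U\sim U(0,1)$: indeed, the comonotonic coupling of $(Z_1,Z_2)$ produces the comonotonic coupling of $(Z_1,\phi'(Z_2))$ precisely because $\phi'$ is non-decreasing.

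Combining these two steps proves the theorem. I would close by checking integrability: one needs $\int\phi(y)\,\d F_i(y)<\infty$ and $\int |\phi'(y)\,y|\,\d F_2(y)<\infty$ plus finiteness of $\int|\Finv_1(u)\,\phi'(\Finv_2(u))|\,\d u$ for the decomposition to be legitimate; these follow from the assumptions in \Cref{prop:mean} together with the hypothesis that $\DB_\phi(F_1,F_2)$ is well-defined. The only subtle point, and the main conceptual obstacle, is that when $\phi'$ is merely non-decreasing (and not strictly so), the optimal coupling is generally not unique; nevertheless the comonotonic coupling always lies in the optimizer set, which is exactly what the statement claims.
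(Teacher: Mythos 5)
Your proof is correct and follows essentially the same route as the paper: decompose the Bregman cost into marginal-only terms plus the bilinear term $\phi'(z_2)\,z_1$, isolate the supremum $\sup_\pi \E[Z_1\,\phi'(Z_2)]$, and invoke the Fr\'echet--Hoeffding (Hardy--Littlewood) bound using that $\phi'$ is non-decreasing. Your additional remarks on integrability and on non-uniqueness of the optimizer when $\phi'$ is not strictly increasing are sensible refinements but do not change the substance of the argument.
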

In the language of OT, the comonotonic coupling implies that the \emph{optimal transport map} of \eqref{eq:BW}, i.e., the deterministic function mapping $F_1$ to $F_2$, is given by $\alpha(x):= \Finv_2\big(F_1(x)\big)$.
We refer the reader to \cite{Rankin2023WP}, who characterise the optimal transport map of the BW divergence for multivariate cdfs. Here, we provided for completeness a constructive proof that is valid for univariate cdfs.

\begin{proof}
Let $\phi$ be convex, then the BW-divergence becomes
\begin{align*}
    \DB_\phi(F_1, F_2)
    &= 
    \inf_{\pi\in\Pi(F_1,\,F_2)} \;\left\{ \,\int_{\R^2} \phi(z_1) - \phi(z_2) - \phi'(z_2) (z_1-z_2)\,\pi(\d z_1,\d z_2)\, \right\}
    \\
    &=
    \int_\R \phi(z_1)\d F_1(z_1) 
    +
    \int_\R\left(\phi'(z_2)z_2- \phi(z_2)  \right)\d F_2(z_2)
    \\
    & \quad 
    -
    \sup_{\pi\in\Pi(F_1,\,F_2)} \;\left\{ \,\int_{\R^2}  \phi'(z_2) z_1\,\pi(\d z_1,\d z_2)\, \right\}\,.
\end{align*}
Thus, to find the optimal coupling, we only have to solve the supremum. Rewriting in terms of random variables, we have 
\begin{equation}\label{eq:pf-BW-rv}
    \sup_{\pi\in\Pi(F_1,\,F_2)} \;\left\{ \,\int_{\R^2}  \phi'(z_2) z_1\,\pi(\d z_1,\d z_2)\, \right\}
    =
    \sup \; \,\E\left[  \phi'(Z_2) Z_1\, \right]\,,
    \quad Z_i \sim F_i\,, \;i = 1,2\,.
\end{equation}
where the supremum is over all copulae between $(Z_1, Z_2)$. Since $\phi'$ is increasing, it is well-known that the above supremum is attained by the comonotonic coupling; see e.g., Chapter 2 in \cite{ruschendorf2013mathematical}. Denoting the quantile function of $F_i$ by $\Finv_{i}$, $i = 1,2$, and since the quantile function of $\phi'(Z_2)$ is $\phi'(\Finv_2(\cdot))$, we obtain 
\begin{equation*}
    \sup \; \,\E\left[ Y Z_1\, \right]
    =
    \E\left[\phi'\left(\Finv_2 (U)\right)\, \Finv_1(U)\right]
    \,, \quad 
    U \sim U(0,1)\,,
    \end{equation*}
which concludes the proof.
\end{proof}

\subsection{Scores of generalised quantiles} 

Next, we consider quantiles and generalised quantiles. Quantiles are elicitable with the family of scoring functions called the generalised piecewise linear scores. 
 
\begin{proposition}[Elicitability of Quantile -- \cite{Gneiting2011}]\
\label{prop:quantile}
If $g$ is an increasing function, then the scoring function 
\begin{equation}
    \label{eq:score-quantile}
    S_g(z,y) 
    :=
    \big(\Id_{\{y\le z\}} - \alpha \big)\big(g(z) - g(y)\big)
    \,,\qquad z,y\in\R\,,
\end{equation}
is $\M$-consistent for the $\alpha$-quantile $\Finv(\alpha)$ if $\int |g(y)|\,\mathrm{d}F(y) <\infty$ for all $F\in\M$. 
If $g$ is strictly increasing and if for all $F\in\M$, $F(\VaR_\alpha(F)+\epsilon)>\alpha$ for all $\epsilon>0$, then \eqref{eq:score-quantile} is strictly $\M$-consistent.
Moreover, on the class of compactly supported measures, any consistent scoring function for $\Finv(\alpha)$ which is continuous in its first argument, which admits a continuous derivative for all $z\neq y$, and which satisfies $S(y,y)=0$, is necessarily of the form \eqref{eq:score-quantile}.
\end{proposition}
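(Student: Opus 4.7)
The plan is to prove the proposition in two directions: the sufficiency of $S_g$ as an $\M$-consistent (or strictly consistent) scoring function for the $\alpha$-quantile, and the characterisation showing that any consistent score of the required regularity must be of this form.

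For the sufficiency direction, I would fix $F \in \M$, write $z^* := \Finv(\alpha)$, and set $\psi(z) := \int S_g(z,y)\,\d F(y)$. Expanding the indicator yields
$$\psi(z) = g(z)\big(F(z)-\alpha\big) - \int_{-\infty}^z g(y)\,\d F(y) + \alpha \int g(y)\,\d F(y).$$
The cleanest path, which avoids any smoothness assumption on $g$ or $F$, is to derive (by adding and subtracting $g(z)F(z^*)$) the identity, for $z > z^*$,
$$\psi(z) - \psi(z^*) = \int_{(z^*,\,z]} \big(g(z) - g(y)\big)\,\d F(y) + \big(F(z^*)-\alpha\big)\big(g(z) - g(z^*)\big),$$
which is non-negative since $g$ is increasing ($y \le z$ in the integration range) and $F(z^*) \ge \alpha$ by the definition of the $\alpha$-quantile. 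The case $z < z^*$ is symmetric and uses $F(z) < \alpha$. Strict consistency under strict monotonicity of $g$ together with the mass condition $F(\VaR_\alpha(F)+\ep) > \alpha$ follows directly: the integrand above is then strictly positive on a set of positive $F$-measure whenever $z \neq z^*$.

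For the characterisation direction, I would appeal to Osband's principle. Any strictly consistent, sufficiently smooth score $S$ must satisfy $\partial_z \int S(z,y)\,\d F(y)\big|_{z=T(F)} = 0$ for every $F \in \M$. Setting $V(z,y) := \partial_z S(z,y)$ for $z \neq y$ (well-defined by hypothesis), testing against two-point measures $F = \alpha\,\delta_{y_1} + (1-\alpha)\,\delta_{y_2}$ with $y_1 < z < y_2$ and varying $y_1, y_2$ forces $V(z,y)$ to depend on $y$ only through $\Id_{\{y \le z\}}$, and specifically to take the form $V(z,y) = h(z)\big(\Id_{\{y \le z\}} - \alpha\big)$ for some function $h$, with $h \ge 0$ enforced by second-order consistency conditions. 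Integrating in $z$ and imposing the normalisation $S(y,y) = 0$ then recovers \eqref{eq:score-quantile} with $g$ an antiderivative of $h$.

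The main obstacle I anticipate is the characterisation step: while the first-order vanishing condition coming from Osband's principle is clean, pinning down the precise functional form of $V$ requires judicious choices of test measures in $\M$ and careful handling of the discontinuity of $S$ across the diagonal $\{y=z\}$. The sufficiency direction, by contrast, reduces to the single integral identity above once one exploits the splitting by the event $\{y \le z\}$.
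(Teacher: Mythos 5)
The paper does not actually prove this proposition; it is cited verbatim from Gneiting (2011), so there is no in-paper proof to compare against. Evaluating your argument on its own merits: the characterisation step via Osband's principle is exactly the standard route Gneiting takes, and your awareness of the subtlety along the diagonal $\{y=z\}$ is well placed. The issue lies in the sufficiency step, where the claim that ``the case $z<z^*$ is symmetric'' is not correct as stated.

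Your identity for $z>z^*$,
\begin{equation*}
\psi(z)-\psi(z^*) \;=\; \int_{(z^*,\,z]}\bigl(g(z)-g(y)\bigr)\,\d F(y) \;+\; \bigl(F(z^*)-\alpha\bigr)\bigl(g(z)-g(z^*)\bigr),
\end{equation*}
is correct, and both summands are non-negative because $g$ is increasing and $F(z^*)\ge\alpha$. But the literal mirror image of this identity for $z<z^*$ reads
\begin{equation*}
\psi(z)-\psi(z^*) \;=\; \int_{(z,\,z^*]}\bigl(g(y)-g(z^*)\bigr)\,\d F(y) \;+\; \bigl(\alpha-F(z)\bigr)\bigl(g(z^*)-g(z)\bigr),
\end{equation*}
in which the first integral is \emph{non-positive} (since $g(y)\le g(z^*)$ on $(z,z^*]$) while only the second term is non-negative. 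Sign considerations alone therefore do not close the case $z<z^*$, and the crude bound $\int_{(z,z^*]} g(y)\,\d F(y)\ge g(z)\bigl(F(z^*)-F(z)\bigr)$ leads only to $\psi(z)-\psi(z^*)\ge\bigl(g(z)-g(z^*)\bigr)\bigl(F(z^*)-\alpha\bigr)\le 0$, which is useless. The fix is to peel off the possible atom of $F$ at $z^*$ and use the left limit. Writing $F(z^*{-}):=\lim_{y\uparrow z^*}F(y)$ and noting that $z^*=\Finv(\alpha)$ forces $F(z^*{-})\le\alpha$, one obtains for $z<z^*$ the structurally different decomposition
\begin{equation*}
\psi(z)-\psi(z^*) \;=\; \int_{(z,\,z^*)}\bigl(g(y)-g(z)\bigr)\,\d F(y) \;+\; \bigl(\alpha-F(z^*{-})\bigr)\bigl(g(z^*)-g(z)\bigr),
\end{equation*}
where now both terms are genuinely non-negative. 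Note that the interval is open at $z^*$, the comparison point inside the integral is $g(z)$ rather than $g(z^*)$, and the probability factor is $\alpha-F(z^*{-})$ rather than $\alpha-F(z)$, so the two cases are not symmetric; the asymmetry is precisely a consequence of choosing the left-continuous quantile $\Finv(\alpha)$. Your remark that the case ``uses $F(z)<\alpha$'' gestures at the right fact but is not the inequality that closes the argument. Strict consistency under the stated mass condition then follows exactly as you indicate, once this corrected decomposition is in hand.
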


Hereafter we show that the comonotonic coupling is also optimal for the generalised piecewise linear score as cost function, i.e., when in \eqref{eq:MK-opt} we choose the cost function $c(z_1, z_2) :=  S_g(z_2,z_1)$.

\begin{theorem}[Optimal coupling for generalised piecewise linear scores]\label{thm:GPL}
The optimal coupling of the MK minimisation problem induced by any consistent generalised piecewise linear score is the comonotonic coupling.
\end{theorem}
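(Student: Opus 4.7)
The plan is to reduce the MK problem with the generalised piecewise linear cost to a well-understood minimisation of a convex functional of a difference, for which the comonotonic coupling is known to be optimal.

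\textbf{Step 1: Split off marginal-dependent terms.} With $c(z_1,z_2)=S_g(z_2,z_1) = \big(\Id_{\{z_1\le z_2\}}-\alpha\big)\big(g(z_2)-g(z_1)\big)$, I expand the integral against $\pi \in \Pi(F_1,F_2)$ and observe that the term $-\alpha\int(g(z_2)-g(z_1))\,\pi(\d z_1,\d z_2) = -\alpha\int g(z_2)\,\d F_2(z_2) + \alpha\int g(z_1)\,\d F_1(z_1)$ depends only on the marginals, and is finite under the integrability hypothesis of \Cref{prop:quantile}. Hence minimising \eqref{eq:MK-divergence} over $\pi$ is equivalent to minimising
\begin{equation*}
    \int_{\R^2} \Id_{\{z_1\le z_2\}}\big(g(z_2)-g(z_1)\big)\,\pi(\d z_1,\d z_2).
\end{equation*}

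\textbf{Step 2: Rewrite as a positive-part expectation.} Since $g$ is increasing, $z_1\le z_2$ is equivalent to $g(z_1)\le g(z_2)$ on the support of $\pi$, so $\Id_{\{z_1\le z_2\}}\big(g(z_2)-g(z_1)\big) = \big(g(z_2)-g(z_1)\big)^+$. Writing $Z_i\sim F_i$ for the marginals of $\pi$ and $Y_i := g(Z_i)$, the problem reduces to
\begin{equation*}
    \inf_{\pi\in\Pi(F_1,F_2)} \E_\pi\!\left[\big(Y_2-Y_1\big)^+\right],
\end{equation*}
where the marginals of $(Y_1,Y_2)$ are fixed (being the pushforwards of $F_1,F_2$ under $g$).

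\textbf{Step 3: Invoke the comonotonic-coupling principle for convex functionals of a difference.} The map $t\mapsto t^+$ is convex, so by the classical Fréchet--Hoeffding / rearrangement result (see, e.g., Chapter 2 of \cite{ruschendorf2013mathematical}; this is the same tool used in the proof of \Cref{optimal-BW}), the expectation $\E[\phi(Y_2-Y_1)]$ over couplings with fixed marginals of $Y_1,Y_2$ is minimised at the comonotonic coupling $\big(\Finv_{Y_1}(U),\Finv_{Y_2}(U)\big)$ for any convex $\phi$. Taking $\phi(t)=t^+$ yields that the minimum is attained when $(Y_1,Y_2) = (g(Z_1),g(Z_2))$ is comonotonic.

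\textbf{Step 4: Lift comonotonicity back to $(Z_1,Z_2)$.} Because $g$ is increasing, the comonotonic coupling $(Z_1,Z_2)=\big(\Finv_1(U),\Finv_2(U)\big)$ with $U\sim U(0,1)$ induces a comonotonic coupling of $(g(Z_1),g(Z_2))$ and hence achieves the minimum in Step~3. Reinstating the marginal constant from Step~1 gives the value of $\DS(F_1,F_2)$ and shows the infimum in \eqref{eq:MK-divergence} is attained by the comonotonic coupling, completing the proof.

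The main obstacle is the cleanness of Step~3: although the convex-$\phi$ minimisation is classical, one has to be careful that $g$ need only be increasing (not strictly so), so that the pushforward marginals of $Y_i$ may have atoms; the result nonetheless applies since it depends only on the convex order of the difference and not on continuity of $g$. A secondary, technical worry is integrability of $(Y_2-Y_1)^+$, which is guaranteed by $\int|g|\,\d F_i<\infty$ from \Cref{prop:quantile}.
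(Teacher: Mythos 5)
Your proposal is correct and follows essentially the same route as the paper's proof: split off the marginal-only term, rewrite the remaining integrand as $(g(z_2)-g(z_1))_+$, and invoke the Fréchet--Hoeffding/convex-order fact (the paper cites \cite{meilijson1979convex}) that the comonotonic coupling minimises $\E[\phi(g(Z_2)-g(Z_1))]$ for convex $\phi$, then lift back via monotonicity of $g$. The only cosmetic difference is that you phrase the convex-order step through the pushforward variables $Y_i=g(Z_i)$ and then note the comonotonic coupling of $(Z_1,Z_2)$ achieves the bound, whereas the paper applies the antitonic-pair inequality directly to $(g(Z_2^c),-g(Z_1^c))$; the key lemma and logic are identical.
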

\begin{proof}
Let $g$ be increasing, then the MK divergence induced by the score \eqref{eq:score-quantile} is
\begin{align*}
    \DS(F_1, F_2)
    &=
    \inf_{\pi\in\Pi(F_1,\,F_2)} \;\left\{\,\int_{\R^2} \big(\Id_{\{z_1\le z_2\}} - \alpha \big)\big(g(z_2) - g(z_1)\big)\,\pi(\d z_1,\d z_2)\, \right\}
    \\
    &=
     \alpha \int_\R g(z_1) \d F_1(z_1) - \alpha \int_\R g(z_2) \d F_2(z_2) 
    \\
    & \quad +
    \inf_{\pi\in\Pi(F_1,\,F_2)} \;\left\{\,\int_{\R^2} \Id_{\{z_1\le z_2\}} \big(g(z_2) - g(z_1)\big)\,\pi(\d z_1,\d z_2)\, \right\}\,.
\end{align*}
The infimum rewritten in terms of random variables, is
\begin{align*}
    \inf_{\pi\in\Pi(F_1,\,F_2)} \;\left\{\,\int_{\R^2}  \big(g(z_2) - g(z_1)\big)_+\,\pi(\d z_1,\d z_2)\, \right\}
    =
    \inf \;
    \E\left[ \, \big(g(Z_2) - g(Z_1)\big)_+\,
    \right]\,,
\end{align*}
where the infimum is over all copulae of $(Z_1, Z_2)$. Note that for any bivariate random vector $(X,Y)$, $X\sim F_X$, $Y\sim F_Y$, it holds that \cite{meilijson1979convex}
\begin{equation}
    X^a + Y^a \prec_{cx} 
    X +Y 
    \prec_{cx} X^c + Y^c\,,
\label{FHbounds2}
\end{equation}
where $\prec_{cx}$ denotes inequality in convex order and where $(X^a,Y^a)$ and 
$(X^c,Y^c)$ denote the antitonic resp. comonotonic pair with marginal distributions $F_X$ and $F_Y$, that is $(X^a,Y^a):=(F_X^{-1}(U),F_Y^{-1}(1-U))$ and $(X^c,Y^c)=(F_X^{-1}(U),F_Y^{-1}(U))$, $U\sim U(0,1)$. As $g$ is increasing, $(g(Z_2^c), - g(Z_1^c))$ is an antitonic pair and the first inequality in \eqref{FHbounds2} implies that 
\begin{equation*}
    g(Z_2^c) - g(Z_1^c) \prec_{cx} g(Z_2) - g(Z_1).
\end{equation*}
Since $f(x) = x_+$ is a convex function we obtain that 
\begin{equation*}
    \inf \;
    \E\left[ \, \big(g(Z_2) - g(Z_1)\big)_+\,
    \right]
    =
    \E\left[ \, \left(g\big(\Finv_2(U)\big) - g\big(\Finv_1(U)\big)\right)_+\,
    \right]\,,
    \quad U \sim U(0,1)\,,
\end{equation*}
which concludes the proof.
\end{proof}

There exist many generalisations of quantiles, such as $L^p$ quantiles, $M$ quantiles and $\Lambda$-quantiles. We show in the sequel that consistent scores of these generalised quantiles give raise to the comonotonic 
coupling being optimal for \eqref{eq:MK-divergence}. 
In this section, we discuss the $\Lambda$-quantile, while the $L^p$ and $M$ quantiles are considered in \Cref{sec:Decom-score}.

For a monotone and right-continuous function $\Lambda \colon \R \to [\underline{\lambda}, \, \overline{\lambda}]$, $0< \underline{\lambda}<\overline{\lambda}<1$, the $\Lambda$-quantile is defined by \cite{Burzoni2017QF}
\begin{equation}\label{eq:Lambda-quant}
    T_\Lambda(F):= 
    \inf\big\{
    y \in \R ~:~ F(y) > \Lambda(y)
    \big\}\,.
\end{equation}
The $\Lambda$-quantile is elicitable with score
\begin{equation}\label{eq:Lambda-score2}
    S(z,y)
    =
    (z-y)_+ - \int_y^z \Lambda(s) \, \d s\,,
\end{equation}
on the space of cdfs that admit only one unique crossing point with $\Lambda(\cdot)$.

\begin{theorem}[Optimal coupling for $\Lambda$-quantile score]\label{thm:lambda}
The optimal coupling of the MK minimisation problem induced by the score given in \eqref{eq:Lambda-score2} is the comonotonic coupling.
\end{theorem}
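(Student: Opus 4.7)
The plan is to reduce the $\Lambda$-quantile case to the convex order argument already used in the proof of \Cref{thm:GPL}. First, I would introduce the antiderivative $G(z) := \int_0^z \Lambda(s)\,\d s$, so that the score decomposes cleanly as
\begin{equation*}
    S(z_2, z_1) \;=\; (z_2 - z_1)_+ \;-\; G(z_2) \;+\; G(z_1).
\end{equation*}
Substituting this decomposition into \eqref{eq:MK-divergence}, the two $G$-terms integrate against the marginals $F_2$ and $F_1$, respectively, and therefore contribute a constant that is independent of the coupling $\pi$. Because $\Lambda$ takes values in the bounded interval $[\underline{\lambda},\overline{\lambda}]$, $G$ is globally Lipschitz and at most linearly growing, so these marginal integrals are finite under the mild integrability assumptions implicit in the $\Lambda$-quantile setup, and can be pulled out of the infimum.

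What remains is to solve
\begin{equation*}
    \inf_{\pi\in\Pi(F_1,F_2)} \int_{\R^2} (z_2 - z_1)_+ \, \pi(\d z_1, \d z_2)
    \;=\; \inf \; \E\!\left[(Z_2 - Z_1)_+\right],
\end{equation*}
where the right-hand infimum is taken over all couplings of $(Z_1,Z_2)$ with $Z_i\sim F_i$. This is the very same reduced problem that appeared in the proof of \Cref{thm:GPL} with the identity function in place of $g$. Applying the convex order bound \eqref{FHbounds2} to the pair $(Z_2,-Z_1)$ yields $Z_2^c - Z_1^c \prec_{cx} Z_2 - Z_1$, and since $x\mapsto x_+$ is convex, Jensen's inequality for the convex order gives
\begin{equation*}
    \inf \; \E\!\left[(Z_2 - Z_1)_+\right]
    \;=\; \E\!\left[\left(\Finv_2(U) - \Finv_1(U)\right)_+\right], \qquad U \sim U(0,1),
\end{equation*}
identifying the comonotonic coupling as optimal.

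The main obstacle I anticipate is purely bookkeeping rather than conceptual: one must ensure that the $G$-terms are integrable against $F_1$ and $F_2$ so that the decomposition of the score into a coupling-independent piece plus $(z_2-z_1)_+$ is legitimate. The boundedness of $\Lambda$ makes this essentially automatic once the marginals are assumed to have a finite first moment, which is consistent with the conditions under which the score \eqref{eq:Lambda-score2} is consistent. Once this is verified, the result falls out of the same convex order machinery invoked in \Cref{thm:GPL}, so no new probabilistic input is required.
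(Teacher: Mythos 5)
Your proof is correct, but it follows a genuinely different route from the paper. The paper observes that $\partial_z S(z,y) = \Id_{\{y\le z\}} - \Lambda(z)$ is decreasing in $y$ (equivalently, the mixed second derivative of the cost $c(z_1,z_2)=S(z_2,z_1)$ is nonpositive), concludes that the cost is submodular, and invokes the classical result that the comonotonic coupling minimises the expected cost for any submodular cost function. Your argument instead decomposes the score as $(z_2-z_1)_+$ plus a separable (modular) remainder $G(z_1)-G(z_2)$, strips the remainder since it is coupling-independent, and then recycles the convex-order argument from the proof of \Cref{thm:GPL} applied to $(z_2-z_1)_+$. The two approaches are tightly related --- a separable function is both sub- and supermodular, so your decomposition is essentially showing that $S$ is submodular because $(z_2-z_1)_+$ is --- but you make the reduction to the already-proved quantile case explicit rather than appealing to the general submodularity criterion. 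Your version is slightly longer but more self-contained and more transparently parallel to \Cref{thm:GPL}; the paper's version is a one-line check once one is comfortable citing the submodularity theory of Cambanis--Simons--Stout. Your integrability remark (boundedness of $\Lambda$ gives at most linear growth of $G$, so finite first moments suffice to pull the marginal terms out of the infimum) is a reasonable addition that the paper leaves implicit.
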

\begin{proof}
As         $\partial_z S(z,y)
        =
        \Id_{\{ y \le z\}} - \Lambda(z)$
    is decreasing in $y$, for all $z$, and $\partial_y S(z,y)
       =
       \Lambda(y) - \Id_{\{ y \le z\}} $ is decreasing in $z,$  for all $y$, the score \eqref{eq:Lambda-score2} is submodular (see also chapter 6.D in \cite{marshall1979inequalities}).
 Hence, since the comonotonic coupling is optimal for submodular cost functions \cite{rachev1998mass}, the result follows.  
\end{proof}

\subsection{Expectile score} 
The comonotonic coupling is also optimal when the cost function is a score that elicits the $\alpha$-expectile.  

\begin{proposition}[Elicitability of Expectiles -- \cite{Gneiting2011}]\label{prop:expectiles}
    Let $\M$ denote the class of cdfs with finite mean. If $\phi$ is (strictly) convex with subgradient $\phi'$ and if 
$\int |\phi(y)|\,\mathrm{d}F(y) <\infty$ as well as $\int |y|\,\mathrm{d}F(y) <\infty$ for all $F\in\M$, then the scoring function 
\begin{equation}
    \label{eq:expectile}
    S(z,y) = \big|\Id_{\{y\le z\}} - \alpha \big|\, B_\phi(y,z)
    \,,\quad z,y\in\R\,,
\end{equation}
is (strictly) $\M$-consistent for the expectile\footnote{The expectile was first introduced in \cite{newey1987asymmetric} as the  minimiser of \eqref{eq:elicitable-Bayes} for the scoring function \eqref{eq:expectile} with $B_\phi(y,z)=(z-y)^2.$  }.
Moreover, on the class of compactly supported measures, any (strictly) consistent score for the expectile which is continuously differentiable in its first argument and which satisfies $S(y,y)=0$, is necessarily of the form \eqref{eq:Bregman}.
\end{proposition}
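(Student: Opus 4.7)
The plan is to split the claim into a sufficiency part (that the score \eqref{eq:expectile} is (strictly) $\M$-consistent for the $\alpha$-expectile) and a necessity part (that under the regularity conditions this is the only admissible form). Since the result is classical and credited to \cite{Gneiting2011}, my sketch mirrors their approach.

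For sufficiency, set $R_F(z) := \int S(z,y)\,\d F(y)$, split the integral at $z$, and differentiate in $z$. The boundary contribution vanishes because $B_\phi(z,z) = 0$, and using $\partial_z B_\phi(y,z) = \phi''(z)(z-y)$ one obtains
\begin{equation*}
R_F'(z) \;=\; \phi''(z)\left[(1-\alpha)\int_{(-\infty,z]}(z-y)\,\d F(y) \;-\; \alpha \int_{(z,\infty)}(y-z)\,\d F(y)\right].
\end{equation*}
Strict convexity of $\phi$ gives $\phi''(z) > 0$, so $R_F'(z) = 0$ is equivalent to the standard expectile first-order condition, with the critical point being the $\alpha$-expectile $e_\alpha(F)$. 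Monotonicity of the bracketed expression in $z$ (both integrals are monotone and enter with opposite signs) shows this critical point is the unique global minimiser of $R_F$, yielding strict consistency. For necessity on the class of compactly supported cdfs, I would invoke Osband's principle: under the differentiability and normalisation hypotheses, any strictly consistent score satisfies $\partial_z S(z,y) = h(z)\, V(z,y)$ for some positive function $h$, where $V(z,y) := (1-\alpha)(z-y)\Id_{\{y\le z\}} - \alpha(y-z)\Id_{\{y > z\}}$ is the canonical identification function of the $\alpha$-expectile. Integrating in $z$ and enforcing $S(y,y) = 0$ then produces precisely \eqref{eq:expectile}, with $\phi$ reconstructed from $h$ via $\phi'' = h$, and strict convexity of $\phi$ corresponding to positivity of $h$.

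The main obstacle lies in the necessity direction. One must justify the interchange of differentiation and integration despite the indicator functions appearing in $S$, establish Osband's principle in sufficient generality (which typically requires varying $F$ across a rich enough subclass of $\M$, most conveniently two-point or three-point measures), and handle the possibility that $F$ places mass at the prediction $z$. The compact-support assumption simplifies these measure-theoretic subtleties by ensuring integrability of all relevant quantities, while the continuous differentiability of $S$ in its first argument is essential for applying Osband's principle.
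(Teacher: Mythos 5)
The paper itself gives no proof of this proposition; it is cited as a known result from Gneiting (2011), so the natural benchmark is that reference rather than anything in the paper.

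Your sketch faithfully reproduces the structure of Gneiting's Theorem~10: an expectile first-order condition for sufficiency and Osband's principle (built around the identification function $V(z,y) = |\Id_{\{y\le z\}}-\alpha|(z-y)$) for the characterization of all consistent scores on compactly supported distributions. Two remarks. First, your sufficiency argument differentiates $R_F$ in $z$ and uses $\partial_z B_\phi(y,z) = \phi''(z)(z-y)$, which requires $\phi$ to be twice differentiable; the proposition only assumes $\phi$ is (strictly) convex with a subgradient $\phi'$, so $\phi''$ need not exist even a.e.~in the relevant sense. Gneiting's own sufficiency argument sidesteps this: it compares $\E_F[S(z,Y)]$ to $\E_F[S(t,Y)]$ directly for $t$ the $\alpha$-expectile, exploiting the convexity of $\phi$ through the Bregman inequality $\phi(y)\ge\phi(z)+\phi'(z)(y-z)$ rather than through a second derivative. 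Your argument is correct under the stronger $C^2$ hypothesis but does not cover the generality in which the proposition is stated; to be rigorous you would need either to upgrade the hypothesis or to replace the differentiation step by the Bregman-inequality comparison. Second, a small textual point: the proposition as printed concludes that every such score ``is necessarily of the form \eqref{eq:Bregman}'', which is clearly a typo for \eqref{eq:expectile}; your reconstruction lands on \eqref{eq:expectile}, which is what was intended.
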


\begin{theorem}[Optimal coupling for $\alpha$-expectile scores]\label{thm:expectile}
The optimal coupling of the MK minimisation problem induced by any consistent score given in $\eqref{eq:expectile}$ is the comonotonic coupling.
\end{theorem}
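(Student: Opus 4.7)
The plan is to mimic the strategy used for the $\Lambda$-quantile in \cref{thm:lambda}: show that the associated MK cost is submodular, and then invoke the classical result that the comonotonic coupling is optimal for submodular cost functions. Writing the cost in the MK convention,
\begin{equation*}
    c(z_1, z_2) := S(z_2, z_1)
    = \big|\Id_{\{z_1 \le z_2\}} - \alpha \big|\, B_\phi(z_1, z_2)\,,
\end{equation*}
which factors on the two open half-planes as $(1-\alpha) B_\phi$ on $\{z_1 < z_2\}$ and $\alpha B_\phi$ on $\{z_1 > z_2\}$.

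The first main step is to show that on each open half-plane the mixed partial derivative is nonpositive. A direct calculation gives $\partial_{z_1 z_2}^2 B_\phi(z_1, z_2) = -\phi''(z_2) \le 0$, so $\partial_{z_1 z_2}^2 c \le 0$ on both regions. The second step is to cross the diagonal $\{z_1 = z_2\}$. Here the weight $|\Id_{\{z_1\le z_2\}}-\alpha|$ jumps from $\alpha$ to $1-\alpha$, but the jump is harmless because $B_\phi(z,z)=0$, so both one-sided values and both one-sided partial derivatives $\partial_{z_1} c$ and $\partial_{z_2} c$ vanish on the diagonal; in particular $c$ is continuous and its partials extend continuously across the diagonal.

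To assemble the global submodularity, I would verify the inequality $c(x_1,y_1)+c(x_2,y_2)\le c(x_1,y_2)+c(x_2,y_1)$ for $x_1<x_2$ and $y_1<y_2$ by showing that $F(y):=c(x_1,y)-c(x_2,y)$ is nondecreasing. Splitting into the three regimes $y<x_1$, $x_1\le y<x_2$, and $y\ge x_2$, the weight configuration is constant on each piece, and $F'(y)$ reduces to a nonnegative multiple of $\phi''(y)(x_2-x_1)$ in the outer regimes and to $\phi''(y)\bigl[(1-\alpha)(y-x_1)-\alpha(y-x_2)\bigr]\ge 0$ in the middle regime (since $y-x_1\ge 0$ and $y-x_2<0$ there). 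Continuity of $F$ at the boundary points $y=x_1$ and $y=x_2$ again follows from $B_\phi(y,y)=0$, so $F$ is globally nondecreasing.

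Submodularity then yields, by the same classical result cited in the proof of \cref{thm:lambda} (e.g., chapter 3 of \cite{rachev1998mass}), that the infimum in \eqref{eq:MK-divergence} is attained by the comonotonic coupling $\bigl(\Finv_1(U),\Finv_2(U)\bigr)$, $U\sim U(0,1)$. The main obstacle is the nonsmoothness of the weight across the diagonal; once one notices that $B_\phi$ vanishes there and therefore absorbs the discontinuity, the rest is a routine case analysis.
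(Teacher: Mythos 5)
Your proof takes essentially the same approach as the paper's: both establish submodularity of the expectile cost $|\Id_{\{z_1\le z_2\}}-\alpha|\,B_\phi(z_1,z_2)$ and then invoke the optimality of the comonotonic coupling for submodular costs, with the decisive observation in each being that $B_\phi$ (and its first partials) vanish on the diagonal, so the weight jump across $\{z_1=z_2\}$ is harmless. The paper verifies submodularity by checking that $\partial_{z_1}c$ is nonincreasing in $z_2$, whereas you verify the defining rectangle inequality directly by showing that $y\mapsto c(x_1,y)-c(x_2,y)$ is nondecreasing; these are equivalent formulations of the same condition, so the two arguments are interchangeable in substance.
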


\begin{proof}
The MK divergence induced by the scoring function \eqref{eq:expectile} 
is given as
\begin{align*}
    \DS(F_1, F_2)
    &=
    \inf_{\pi\in\Pi(F_1,\,F_2)} \;\left\{\,\int_{\R^2} \big|\Id_{\{z_1\le z_2\}} - \alpha \big| \, B_\phi(z_1,z_2)\,\pi(\d z_1,\d z_2)\, \right\}
    \\
    &=
     \inf_{\pi\in\Pi(F_1,\,F_2)} \; \bigg\{\int_{\R^2} (1-\alpha) B_\phi(z_1,z_2) \Id_{\{z_1 \leq z_2\}} \pi(\d z_1,\d z_2)\  
    \\
    & \hspace{75pt} +
      \int_{\R^2} \alpha B_\phi(z_1,z_2) \Id_{\{z_1 > z_2\}} \pi(\d z_1,\d z_2) \bigg\}\,.
\end{align*}
Convexity of  $\phi$ implies that $B_\phi(z_1,z_2)$ is submodular. It readily verifies that
$$(1-\alpha) \partial_{z_1} B_\phi(z_1,z_2) \Id_{\{z_1 \leq z_2\}} + \alpha \partial_{z_1} B_\phi(z_1,z_2) \Id_{\{z_1 > z_2\}}$$ is decreasing in $z_2$ for all $z_1$, which implies that the function
$$(1-\alpha) B_\phi(z_1,z_2) \Id_{\{z_1 \leq z_2\}} + \alpha B_\phi(z_1,z_2) \Id_{\{z_1 > z_2\}}$$ 
is submodular \cite{marshall1979inequalities}. As the comonotonic coupling is optimal for submodular cost functions, the result follows.  
\end{proof}

\subsection{Shortfall score}
A popular class of risk measures is the family of shortfall risk measures. We first recall their definition and the scoring functions that elicit shortfall risk measures.

\begin{definition}[Shortfall risk measure]\label{def:score-shortfall}
Let $\ell\colon \R\to\R$ be an increasing and non-constant function satisfying $\ell(w)<0$, whenever $w <0$, and $\ell(w)>0$, whenever $w >0$. Then the shortfall risk measure $T_\ell$ is defined by
\begin{equation}
    T_\ell(F)
    :=
    \inf\left\{ x \in \R ~\Big|~ \int \ell( w-x) \, \d F(w) \le 0\right\}\,,
\end{equation}
whenever the infimum exists.
If furthermore $\ell $ is left-continuous and strictly increasing on either $(-\infty, \ep)$ or $(\ep, +\infty)$, for $\ep>0$, then $T_\ell$ is 1-elicitable with strictly consistent score \cite{BelliniQF2015}
\begin{equation}\label{eq:shortfall}
    S(z,y) = \int_0^{y-z} \ell(s) \, \d s\,.
\end{equation}
\end{definition}

\begin{theorem}[Optimal coupling for shortfall scores]\label{thm:shorfall}
    The optimal coupling of the MK minimisation problem induced by the scoring function given in \eqref{eq:shortfall} is the comonotonic coupling.
\end{theorem}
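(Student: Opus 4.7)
The plan is to mirror the strategy of Theorems \ref{thm:lambda} and \ref{thm:expectile}: I would show that the cost function $c(z_1,z_2):=S(z_2,z_1)=\int_0^{z_1-z_2}\ell(s)\,\d s$ is submodular on $\R^2$, and then invoke the classical result (e.g., \cite{rachev1998mass}) that the comonotonic coupling is optimal for the MK problem whenever the cost is submodular. Nonnegativity and lower semi-continuity of $c$ are immediate from the sign assumption and monotonicity of $\ell$ in \cref{def:score-shortfall}.

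For the submodularity step, the quickest route is through the partial derivative: formally $\partial_{z_1}c(z_1,z_2)=\ell(z_1-z_2)$ is decreasing in $z_2$ because $\ell$ is increasing, which is precisely the decreasing-differences property. To avoid any smoothness assumption on $\ell$, one can verify the condition directly: for $z_1\le z_1'$ and $z_2\le z_2'$,
\begin{equation*}
\bigl[c(z_1',z_2)-c(z_1,z_2)\bigr]-\bigl[c(z_1',z_2')-c(z_1,z_2')\bigr]
=\int_{z_1-z_2}^{z_1'-z_2}\ell(s)\,\d s-\int_{z_1-z_2'}^{z_1'-z_2'}\ell(s)\,\d s\;\ge\; 0,
\end{equation*}
since both integrals are taken over intervals of common length $z_1'-z_1$, the first interval lies to the right of the second (shifted by $z_2'-z_2\ge 0$), and $\ell$ is increasing. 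This establishes submodularity, and the theorem follows from the cited optimality result.

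The main difficulty I anticipate is essentially bookkeeping around the sign conventions for submodularity versus the direction of $\ell$'s monotonicity; no deeper obstacle seems to be present, and the argument is completely parallel to the $\Lambda$-quantile and expectile cases treated above, where the same submodular-cost transport principle carries the conclusion.
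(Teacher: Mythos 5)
Your proof is correct and follows essentially the same route as the paper: compute the first partial of the score/cost to exhibit decreasing differences (submodularity, since $\ell$ is increasing), then invoke the standard result that the comonotonic coupling is optimal for the Monge--Kantorovich minimisation problem with a submodular cost. The derivative-free integral comparison you add is a harmless elaboration of the same step, useful if one wants to avoid assuming $\ell$ is a classical derivative of the integrand.
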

\begin{proof}
As $\partial_z S(z,y)   = -\ell(y-z) $
    is decreasing in $y$, for all $z$, the scoring function in \eqref{eq:shortfall} is submodular. 
\end{proof}

\subsection{Decomposable scores}\label{sec:Decom-score}
Next we study a family of scores that elicits different risk functionals including quantiles, expectiles, and $M$ and $L^p$ quantiles.

\begin{theorem}[Optimal coupling for decomposable scores]
Assume the scoring function is of the form 
\begin{equation}\label{eq:decom-score}
    S(z,y)= \phi(|z-y|) \left( \alpha \,\Id_{\{y > z\}}
    +
    \beta \,\Id_{\{y \le z\}}\right)\,,
\end{equation} 
where $\alpha, \beta \in [0,1]$ and $\phi \colon \R_+ \to \R_+$ is an increasing convex function satisfying $\phi(0) =0$. Then the optimal coupling of the MK minimisation problem induced by any score given in \eqref{eq:decom-score} is the comonotonic coupling.
\end{theorem}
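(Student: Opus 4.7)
The plan is to follow the same template used in the proofs of \cref{thm:lambda} and \cref{thm:shorfall}: show that the cost function $c(z_1,z_2) := S(z_2,z_1)$ is submodular, and then invoke the classical result that submodular costs are minimised by the comonotonic coupling on $\R$ (see \cite{rachev1998mass}).

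Written out, the cost function splits as
\begin{equation*}
c(z_1,z_2) = \beta\,\phi(z_2-z_1)\,\Id_{\{z_1\le z_2\}} + \alpha\,\phi(z_1-z_2)\,\Id_{\{z_1> z_2\}},
\end{equation*}
and $c(z_1,z_1)=0$ because $\phi(0)=0$. Away from the diagonal, I would compute $\partial_{z_1} c(z_1,z_2) = -\beta\,\phi'(z_2-z_1)$ on $\{z_1<z_2\}$ and $\partial_{z_1} c(z_1,z_2) = \alpha\,\phi'(z_1-z_2)$ on $\{z_1>z_2\}$. Within each of the two regions, $\partial_{z_1} c$ is decreasing in $z_2$ because $\phi$ is convex and monotone, so $\phi''\ge 0$. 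Across the diagonal, as $z_2$ increases through $z_1$, the value of $\partial_{z_1} c$ jumps from $\alpha\,\phi'((z_1-z_2)+)\ge 0$ down to $-\beta\,\phi'((z_2-z_1)+)\le 0$, since $\phi$ is increasing with $\phi'\ge 0$. This monotonicity of $\partial_{z_1} c$ in $z_2$ gives submodularity of $c$, as invoked in \cref{thm:lambda,thm:expectile,thm:shorfall}.

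The main delicate point will be that $\phi$ need not be differentiable at $0$ (and the indicator produces an explicit jump along the diagonal), so I would bypass pointwise differentiation and check submodularity directly from its finite-difference definition: for $z_1<z_1'$ and $z_2<z_2'$, one verifies
\begin{equation*}
c(z_1,z_2)+c(z_1',z_2')\le c(z_1,z_2')+c(z_1',z_2)
\end{equation*}
by distinguishing the cases according to the positions of $z_1,z_1',z_2,z_2'$ relative to the diagonal. In each case the inequality reduces to a convexity/monotonicity statement about $\phi$ on $\R_+$, which holds because $\phi$ is increasing and convex with $\phi(0)=0$ (in particular $\phi(a+b)\ge \phi(a)+\phi(b)-\phi(0) = \phi(a)+\phi(b)$ type estimates and the standard supermodularity of $(u,v)\mapsto -\phi(|u-v|)$ for such $\phi$). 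Once submodularity is established, the comonotonic coupling $(\Finv_1(U),\Finv_2(U))$, $U\sim U(0,1)$, attains the infimum in \eqref{eq:MK-divergence}, which is the claim.
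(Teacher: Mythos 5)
Your proof is correct, but it routes through submodularity rather than through the convex--order argument the paper actually invokes here. The paper rewrites the cost exactly as you do, $c(z_1,z_2)=S(z_2,z_1)=\alpha\,\phi\big((z_1-z_2)_+\big)+\beta\,\phi\big((z_2-z_1)_+\big)$, observes that this is a convex function of the difference $z_1-z_2$ (since $\phi$ is increasing and convex with $\phi(0)=0$, both $\phi(x_+)$ and $\phi((-x)_+)$ are convex), and then appeals to the same convex-ordering step as in the proof of \cref{thm:GPL}: the antitonic pair $(Z_1^c,-Z_2^c)$ minimises $Z_1-Z_2$ in convex order (Meilijson--Nadas / Fr\'echet--Hoeffding), hence the comonotonic coupling minimises $\E\big[h(Z_1-Z_2)\big]$ for any convex $h$. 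You instead verify submodularity of $c$ and cite the comonotonic-optimality-for-submodular-costs theorem, which is the tool the paper uses for \cref{thm:lambda}, \cref{thm:expectile}, and \cref{thm:shorfall}. The two routes are essentially equivalent for costs of the form $h(z_1-z_2)$ with $h$ convex (indeed, your finite-difference check of submodularity is precisely the discrete convexity inequality $h(b)+h(c)\le h(a)+h(d)$ for $a<b,c<d$, $a+d=b+c$), so nothing is gained or lost substantively. One small thing you handle more carefully than the paper: the paper leans on the convex-order argument, which applies to the (possibly nondifferentiable) convex function $\phi(\cdot_+)$ without comment, whereas your pointwise-derivative sketch genuinely does not cover the kink at the diagonal and the potential nondifferentiability of $\phi$ at $0$; your fallback to the finite-difference definition of submodularity is the right fix and is rigorous.
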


\begin{proof}
    We rewrite the scoring function to
    \begin{equation*}
    S(z,y)
    = \phi\big(|z-y|\big) \left( \alpha \,\Id_{\{y > z\}}
    +
    \beta \,\Id_{\{y \le z\}}\right)
    = 
    \alpha \, \phi\big((y-z)_+\big)
     + \beta \, \phi\big((z-y)_+\big)\,.
    \end{equation*}
    As the function $\phi(x_+)$ is convex, the results follows using similar arguments as in the proof of \Cref{thm:GPL}.
\end{proof}

The scoring function \eqref{eq:decom-score} elicits the $\alpha$-quantile for the choice $\phi(x) = x$ and $\beta = 1-\alpha$ and the $\alpha$-expectile with $\phi(x) = x^2$ and $\beta = 1-\alpha$. More generally,  the score \eqref{eq:decom-score} for any arbitrary convex $\phi$ and $\beta=1-\alpha$ elicit the so-called $M$-quantiles, defined via \eqref{eq:elicitable-Bayes} whenever they exits, see e.g., \cite{breckling1988}. $M$-quantiles subsume $L^p$-quantiles, which correspond to $\phi(x) = x^p$, $p \in [1,\infty)$, see
\cite{chen1996}. 

\subsection{Osband's transformation of scores}
Osband's principle is highly regarded in statistics to create new elicitable functionals \cite{Gneiting2011}. Here, we use Osband's principle to characterise the optimal coupling induced by consistent scores for monotone transformations of elicitable functionals.

\begin{proposition}[Osband's principle for OT]\label{prop:Osband}
Let $T$ be a strictly monotone transformation of a 1-elicitable functional $\tilde{T}$, i.e. $T:= g \circ \tilde{T}$ for $g\colon \R \to \R$ strictly monotone, and denote by $\tilde{S}$ a (strictly) $\M$-consistent scoring function for $\tilde{T}$. Then $T$ is 1-elicitable with (strictly) $\M$-consistent scoring function 
\begin{equation}\label{eq:osband-score}
    S(z,y):= \tilde{S}\big(g^{-1}(z), y\big)\,.
\end{equation}
Let the corresponding MK minimisation problem of $\tilde{S}$ be attained by the comonotonic coupling. 
If further $g$ is increasing (decreasing) then the optimal coupling of the MK minimisation problem induced by the score \eqref{eq:osband-score} is the comonotonic (antitonic) coupling.
\end{proposition}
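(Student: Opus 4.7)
The plan is to split the argument into two parts. First, the elicitability of $T$ with score $S$ follows by direct substitution in the expected score: for any $F \in \M$ and $z \in \A$,
$$\int S(z,y) \, \d F(y) = \int \tilde S\big(g^{-1}(z), y\big) \, \d F(y),$$
so minimising over $z$ is equivalent, via the bijection $w = g^{-1}(z)$, to minimising over $w$, which by (strict) $\M$-consistency of $\tilde S$ admits a unique minimiser $w^* = \tilde T(F)$, i.e. $z^* = g(\tilde T(F)) = T(F)$. Strict consistency of $S$ transfers from that of $\tilde S$ because $g^{-1}$ is a bijection.

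Second, for the optimal coupling statement I would perform a measure-theoretic change of variables in the second coordinate. Let $\tilde F_2$ denote the pushforward of $F_2$ under $g^{-1}$, i.e.\ the law of $g^{-1}(Z_2)$ for $Z_2\sim F_2$. The map $(z_1,z_2)\mapsto (z_1,g^{-1}(z_2))$ defines a bijection between $\Pi(F_1,F_2)$ and $\Pi(F_1,\tilde F_2)$, and under this bijection the MK objective transforms as
$$\int_{\R^2} S(z_2,z_1)\,\pi(\d z_1,\d z_2) = \int_{\R^2} \tilde S(w_2,z_1)\,\tilde\pi(\d z_1,\d w_2).$$
Consequently, the MK divergence induced by $S$ between $(F_1,F_2)$ equals the MK divergence induced by $\tilde S$ between $(F_1,\tilde F_2)$. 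By hypothesis, the latter infimum is attained by the comonotonic coupling $\big(\Finv_1(U),\widetilde{\Finv}_2(U)\big)$ with $U\sim U(0,1)$.

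Third, I would translate this optimum back into a coupling of $(F_1,F_2)$ by composing the second coordinate with $g$. If $g$ is strictly increasing then $\Finv_2(u)=g\big(\widetilde{\Finv}_2(u)\big)$, so the pushforward under $(z_1,w_2)\mapsto(z_1,g(w_2))$ is $\big(\Finv_1(U),\Finv_2(U)\big)$, the comonotonic coupling of $(F_1,F_2)$. If $g$ is strictly decreasing then, setting $V:=1-U$, one has $\Finv_2(u)=g\big(\widetilde{\Finv}_2(1-u)\big)$, so the optimal coupling becomes $\big(\Finv_1(U),\Finv_2(1-U)\big)$, which is the antitonic coupling of $(F_1,F_2)$.

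The main obstacle I expect is the precise quantile-function bookkeeping in the decreasing case: the identity $\Finv_2(u)=g\big(\widetilde{\Finv}_2(1-u)\big)$ must be handled with some care at atoms of $F_2$ and at points of discontinuity of $g$, given the left-continuous convention used for the generalised inverse. However, this technicality does not affect the distributional identification of the optimal coupling, since the antitonic coupling of $(F_1,F_2)$ is unambiguously characterised (in distribution) by $\big(\Finv_1(U),\Finv_2(1-U)\big)$, and existence of a minimiser is guaranteed by the general existence result for the MK problem cited after \eqref{eq:MK-opt}.
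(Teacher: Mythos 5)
Your proposal is correct and is, in effect, the rigorous version of the argument the paper only gestures at. For elicitability the paper simply cites Osband's principle (Theorem~4 in Gneiting, 2011); you derive the same conclusion directly via the substitution $w=g^{-1}(z)$, which is fine and equivalent. For the coupling statement the paper's own proof is extremely terse: it observes that $(g^{-1})'(x)>0$ (resp.\ $<0$) and then asserts ``thus, the optimal coupling is comonotonic (antitonic).'' The mechanism is left implicit, and given that the neighbouring results in the paper are proved via submodularity, one might read the intended argument as ``composing the second argument with an increasing $g^{-1}$ preserves submodularity of the cost'' — which, if that is the intent, would use a stronger assumption than the one actually stated (namely, that the comonotonic coupling is optimal for $\tilde{S}$, not that $\tilde{S}$ induces a submodular cost). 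Your change-of-variables argument is cleaner on this point: by pushing the second coordinate through $g^{-1}$, you convert the MK problem with cost $S$ between $(F_1,F_2)$ into the MK problem with cost $\tilde{S}$ between $(F_1,\tilde F_2)$, invoke the hypothesis on $\tilde{S}$ \emph{as stated}, and then push the optimal comonotonic coupling back through $g$, using $\Finv_2(u)=g\big(\widetilde{\Finv}_2(u)\big)$ when $g$ is increasing and $\Finv_2(u)=g\big(\widetilde{\Finv}_2(1-u)\big)$ when $g$ is decreasing. You are also right that the only bookkeeping subtlety is the quantile identity at atoms in the decreasing case, and that it does not affect the distributional characterisation of the optimiser. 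In short: same underlying idea (monotonicity of $g^{-1}$ preserves or reverses the ordering of the second coordinate), but your write-up supplies the pushforward argument that the paper omits and avoids any implicit reliance on submodularity.
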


\begin{proof}
By Osband's principle, the functional $T:= g \circ \tilde{T}$ for any bijective function $g\colon \R \to \R$ is 1-elicitable with scoring function given in \eqref{eq:osband-score}, see e.g., Theorem 4 in \cite{Gneiting2011}. If $g$ is strictly increasing then, $\frac{d}{dx}g^{-1}(x) = \big\{g'\, \big(g^{-1}(x)\big)\big\}^{-1} > 0$, where $g'(x):= \frac{d}{dx}g(x)$. Thus, the optimal coupling is comonotonic. If $g$ is strictly decreasing, then $\frac{d}{dx}g^{-1}(x) <0$ and the antitonic coupling is optimal.
\end{proof}

Osband's principle can be used to create new MK divergences. Indeed any strictly monotonic transformation of an elicitable risk functional leads to a new MK divergence, where the optimal coupling follows from \cref{prop:Osband}. Here, we give an example of the reciprocate of risk functionals. 

\begin{corollary}[Inverse of risk functionals]
Let $\tilde{T}$ be a strictly positive 1-elicitable risk functional and assume that the corresponding MK minimisation problem is attained by the comonotonic coupling. 

Next consider the functional
\begin{equation*}
    T(F):= \frac{1}{\tilde{T}(F)}\,,
\end{equation*}
then the MK minimisation problem induced by any consistent score for $T$ is attained by the antitonic coupling.
\end{corollary}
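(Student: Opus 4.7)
The plan is to recognise this corollary as a direct consequence of \cref{prop:Osband} applied to the specific choice $g(x) = 1/x$, and the bulk of the work is simply to verify that this $g$ meets the hypotheses of the proposition.

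First I would note that since $\tilde{T}$ is strictly positive, its range lies in $(0, \infty)$, so the map $g \colon (0,\infty) \to (0,\infty)$ defined by $g(x) := 1/x$ is well-defined on the range of $\tilde{T}$. This $g$ is strictly decreasing and is its own inverse, i.e.\ $g^{-1}(z) = 1/z$. Then by Osband's principle as recorded in \cref{prop:Osband}, the functional $T = g \circ \tilde{T} = 1/\tilde{T}$ is 1-elicitable, and every strictly consistent score for $T$ has the form
\begin{equation*}
    S(z,y) = \tilde{S}\big(1/z,\, y\big)
\end{equation*}
for some strictly consistent score $\tilde{S}$ of $\tilde{T}$.

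Next I would invoke the second half of \cref{prop:Osband}. By hypothesis, the MK minimisation problem induced by $\tilde{S}$ is attained by the comonotonic coupling. Since $g(x) = 1/x$ is strictly decreasing on $(0,\infty)$, the proposition yields immediately that the MK minimisation problem induced by the score $S(z,y) = \tilde{S}(1/z, y)$ is attained by the antitonic coupling. This holds for every consistent $\tilde{S}$ and therefore for every consistent score of $T$, which is exactly the claim.

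The only genuine subtlety — and arguably the only step worth checking with care — is the domain issue: we need the argument $1/z$ inside $\tilde{S}$ to land in the action domain of $\tilde{T}$. Strict positivity of $\tilde{T}$ ensures this for the relevant probabilistic setting (all couplings concentrate on supports where the arguments are positive). Apart from this caveat, no new computation is required; the result is purely a transfer of the previously established optimality statements along the order-reversing bijection $x \mapsto 1/x$.
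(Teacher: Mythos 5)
Your proposal is correct and follows the paper's own argument exactly: apply \cref{prop:Osband} with $g(x) = 1/x$, observe that $g$ is strictly decreasing on $(0,\infty)$, and conclude that the antitonic coupling is optimal. The additional domain check you flag is a reasonable caution but does not change the substance; the paper treats this as immediate.
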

\begin{proof}
This follows by \Cref{prop:Osband} with $T(\cdot) = g\big(\tilde{T}(\cdot)\big)$,
where $g(x) = \frac{1}{x}$ is decreasing.
\end{proof}

The next application of Osband's principle concern transformations of the distribution. For this, let $F_Z$ denote the distribution of a random variable $Z$.

\begin{proposition}[Osband's principle for OT]
\label{prop:Osband-Y}
    Let $\tilde{T}$ be a 1-elicitable risk functional. For a function $h\colon \R \to \R$, consider the risk functional $T(F_Y):= \tilde{T}\left(F_{h(Y)}\right)$. 
    Assume that the MK minimisation problem induced by a consistent score $\tilde{S}$ for $\tilde{T}$ is attained by the comonotonic coupling. If $h$ is increasing (decreasing), then the optimal coupling of the MK minimisation problem induced by a consistent score of $T$ is the comonotonic (antitonic) coupling.
\end{proposition}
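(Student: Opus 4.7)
The plan is to construct a canonical consistent score for $T$ and then transfer the optimality assumption to it via a pushforward argument. First I would define $S(z,y) := \tilde{S}\bigl(z, h(y)\bigr)$ and verify it is (strictly) $\M$-consistent for $T$ by the change of variables $w = h(y)$:
\begin{equation*}
    \int S(z,y)\,\d F_Y(y) \;=\; \int \tilde{S}(z, w)\,\d F_{h(Y)}(w),
\end{equation*}
whose unique minimiser in $z$ is $\tilde{T}(F_{h(Y)}) = T(F_Y)$. This mirrors \Cref{prop:Osband}, but pulls $\tilde{S}$ back along $h$ in its second argument rather than its first.

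Next I would rewrite the MK divergence $\DS$ as an optimisation on the pushed-forward law. For any coupling $\pi \in \Pi(F_1, F_2)$ of $(Z_1, Z_2)$, the law $\tau$ of $\bigl(h(Z_1), Z_2\bigr)$ lies in $\Pi(F_{h(Z_1)}, F_2)$, and the identity $S(z_2, z_1) = \tilde{S}(z_2, h(z_1))$ gives
\begin{equation*}
    \int_{\R^2} S(z_2, z_1)\,\pi(\d z_1,\d z_2)
    \;=\;
    \int_{\R^2} \tilde{S}(z_2, w)\,\tau(\d w,\d z_2).
\end{equation*}
Passing to the infimum over $\pi$ therefore yields $\DS(F_1, F_2) \ge \tilde{\DS}\bigl(F_{h(Z_1)}, F_2\bigr)$, where $\tilde{\DS}$ denotes the MK divergence induced by $\tilde{S}$; by hypothesis the right-hand side is attained at the comonotonic coupling of $(W, Z_2)$ with $W \sim F_{h(Z_1)}$.

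To exhibit a $\pi$ that achieves equality, I would split on the monotonicity of $h$. If $h$ is increasing, the comonotonic coupling $\pi^\star := \mathrm{Law}\bigl(\Finv_1(U), \Finv_2(U)\bigr)$, $U \sim U(0,1)$, pushes forward to $\mathrm{Law}\bigl(h(\Finv_1(U)), \Finv_2(U)\bigr)$, which is the comonotonic coupling of $(W, Z_2)$ since $h \circ \Finv_1$ is increasing; hence $\pi^\star$ attains $\tilde{\DS}$ and therefore $\DS$. If $h$ is decreasing, I would use the antitonic coupling $\mathrm{Law}\bigl(\Finv_1(1-U), \Finv_2(U)\bigr)$, whose pushforward is again comonotonic in $(W, Z_2)$ because $u \mapsto h(\Finv_1(1-u))$ is increasing. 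In each case the candidate coupling saturates the lower bound and is hence optimal. The only delicate point is the pushforward correspondence between coupling spaces; happily only the easy direction (every $\pi$ on $(Z_1, Z_2)$ induces a $\tau$ on $(W, Z_2)$) is needed, so strict monotonicity of $h$ is not required and the argument goes through uniformly.
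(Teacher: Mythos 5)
Your proof is correct and follows essentially the same route as the paper: first construct the pulled-back score $S(z,y) = \tilde{S}(z, h(y))$ and verify its consistency for $T$ via the change of variables $w = h(y)$, then transfer comonotonicity from $\tilde{S}$ to $S$ via the monotonicity of $h$. The paper's treatment of the second step is terse (it simply defers to ``similar arguments as in the proof of Proposition~\ref{prop:Osband}''), whereas you make the mechanism explicit with the pushforward $\pi \mapsto \mathrm{Law}(h(Z_1), Z_2)$: every $\pi$ induces a $\tau \in \Pi(F_{h(Z_1)}, F_2)$, giving the lower bound $\DS(F_1, F_2) \ge \tilde{\DS}(F_{h(Z_1)}, F_2)$, which the comonotonic (resp.\ antitonic) coupling saturates when $h$ is increasing (resp.\ decreasing). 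Your observation that only the forward direction of the pushforward correspondence is needed, so that strict monotonicity and injectivity of $h$ are not required, is a nice clarification that the paper leaves implicit.
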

\begin{proof}
    A score $\tilde{S}$ is strictly consistent for $\tilde{T}$ applied to $F_{h(Y)}$ if and only if it holds for all $z \neq \tilde{T}(F_{h(Y)})$ that
    \begin{equation}\label{eq:pf-score-entropic}
    \int \tilde{S} \Big(\tilde{T}\big(F_{h(Y)}\big),\tilde{y}\Big)\,\d F_{h(Y)}(\tilde{y}) 
        \;\le\;
    \int \tilde{S}\left(z,\tilde{y}\right)\,\d F_{h( Y)}(\tilde{y})\,.
    \end{equation}
The inequalities \eqref{eq:pf-score-entropic} are equivalent to 
    \begin{equation}
    \int \tilde{S}\Big(T\big(F_{Y}\big), \, h(y)\Big)\,\d F_{Y}( y) 
        \le
    \int \tilde{S}\big(z,\,  h(y)\big)\,\d F_Y( y)\,,
    \end{equation}
for all $z\neq T(F_Y)$. Thus, the score $S(z,y):= \tilde{S}\big(z, h(y)\big)$ is a consistent score for $T(F_Y)$. The reminder follows using similar arguments as in the proof of \cref{prop:Osband}.
\end{proof}

An immediate example of \cref{prop:Osband-Y} is that of the functional $T(F_Y) = \E\left[\frac{1}{Y}\right]$ is elicitable and induces MK divergences for which the antitonic coupling is simultaneously optimal for any consistent score of $T$. 

Using Osband's principles for OT, we derive the optimal coupling for the scores of entropic risk measures. The entropic risk measure with parameter $\gamma >0$, also known as the exponential premium principle in actuarial science \cite{Gerber1974ASTIN}, is defined for $F\in\Mo$, by 
\begin{equation*}
    T^\gamma(F) 
    :=
    \frac{1}{\gamma}\, \log \, \int e^{\gamma y} \, \d F(y)\,, 
\end{equation*}
Any (strictly) $\M$-consistent scoring function (under mild regularity conditions) for the entropic risk measure satisfying $S(y,y) = 0$ is given by
\begin{equation*}
    S(z,y)
    =
    \phi\big(e^{\gamma y}\big)
    -
    \phi\big(e^{\gamma z}\big)
    +  
    \phi'\big(e^{\gamma y}\big) \left(e^{\gamma z}-e^{\gamma y}\right)\,,
\end{equation*}
where $\phi \colon \R \to \R$ is (strictly) convex and $\int|\phi\big(e^{\gamma y}\big)|\, \d F(y) < \infty$ for all $F \in \M$; for more detail see Appendix A1 in \cite{Fissler2023EJOR}.

\begin{corollary}[Optimal coupling for entropic scores]
The optimal coupling of the MK minimisation problem induced by any consistent scoring function for the entropic risk measure is the comonotonic coupling.    
\end{corollary}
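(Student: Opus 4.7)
The plan is to recognise the entropic risk measure as a composition of two strictly increasing transformations applied to the mean functional, and then to invoke \Cref{prop:Osband-Y} and \Cref{prop:Osband} in succession so that the comonotonic optimality established in \Cref{optimal-BW} propagates to the entropic case. Concretely, set $h(y):=e^{\gamma y}$ and $g(x):=\tfrac{1}{\gamma}\log x$, and let $\tilde{T}$ denote the mean functional. Then $h$ is strictly increasing on $\R$, $g$ is strictly increasing on $(0,\infty)$, and one may write $T^\gamma(F_Y)=g\bigl(\bar{T}(F_Y)\bigr)$ where $\bar{T}(F_Y):=\tilde{T}\bigl(F_{h(Y)}\bigr)=\E[e^{\gamma Y}]$.

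Given this decomposition, the argument is a two-step chain. By \Cref{optimal-BW}, the MK problem induced by any Bregman score for $\tilde{T}$ is attained by the comonotonic coupling. Invoking \Cref{prop:Osband-Y} with the increasing inner transformation $h$, the same optimality holds for the MK problem induced by any consistent score of $\bar{T}$. A further application of \Cref{prop:Osband} with the increasing outer transformation $g$ transfers the optimality to any consistent score of $T^\gamma=g\circ\bar{T}$, which is the claimed result.

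The only point requiring care is that the corollary asserts the conclusion for \emph{every} consistent score of $T^\gamma$, whereas the Osband chain a priori produces only scores of the specific form $S(z,y)=\tilde{S}\bigl(g^{-1}(z),h(y)\bigr)$ arising from a Bregman score $\tilde{S}$. To close this gap I would appeal to the characterisation referenced just before the corollary (from Appendix~A1 of \cite{Fissler2023EJOR}), which identifies precisely this family, under the stated regularity and the normalisation $S(y,y)=0$, as the collection of all strictly consistent scores for the entropic risk measure. Hence the two propositions exhaust the admissible class of scores, and the main obstacle is really just this bookkeeping verification that the Osband construction is surjective onto the family in question; once granted, the comonotonic coupling is simultaneously optimal for every MK divergence in the entropic family.
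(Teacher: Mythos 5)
Your proposal is correct and follows essentially the same route as the paper's proof: the same decomposition $T^\gamma = g\circ\tilde{T}\bigl(F_{h(Y)}\bigr)$ with $h(y)=e^{\gamma y}$ and $g(x)=\tfrac{1}{\gamma}\log x$, and the same chain \Cref{optimal-BW} $\to$ \Cref{prop:Osband-Y} $\to$ \Cref{prop:Osband}. Your explicit observation that one must invoke the characterisation of entropic scores to upgrade the conclusion to \emph{every} consistent score is a gap the paper closes only implicitly, by remarking that the Osband-constructed score ``is indeed'' the general form stated before the corollary.
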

\begin{proof}
Denote by $\tilde{T}(G) := \int x\,  \d G(x) $ the expectation functional with consistent score $S_\phi$, the Bregman score, i.e. \eqref{eq:Bregman}. Then, it holds that 
\begin{equation*}
    T^\gamma(F) := g\circ \tilde{T}\big(F_{h(Y)}\big)\,,
\end{equation*}
where $g(x):= \frac{1}{\gamma} \log(x)$ and $h(x):= e^{\gamma x}$. By Osband's principles, see the proofs of Propositions \ref{prop:Osband} and \ref{prop:Osband-Y}, the consistent score for $T^\gamma$ is thus given by 
\begin{equation*}
    S(z,y) 
    = S_\phi \left(g^{-1}(z), \, h( y)\right)
    =
    \phi(e^{\gamma z}) - \phi(e^{\gamma y}) - \phi'(e^{\gamma y}) (e^{\gamma z}-e^{\gamma y})\,,    
\end{equation*}
which is indeed the strictly consistent scoring function for the entropic risk measure. Finally, applying Propositions \ref{prop:Osband} and \ref{prop:Osband-Y} concludes the proof.
 \end{proof}

\subsection{Scores of law-invariant risk measures}
In this section we consider MK divergences derived from elicitable convex and coherent risk measures. We show that the optimal coupling of the MK divergence induced by any strictly consistent score that elicits a coherent risk measure is the comonotonic coupling. For this, denote by $T  \colon \M^0 \to \R$ a law-invariant risk measure. A law-invariant risk measure can equivalently be defined on the space of random variables $\L$, as the functional $T\colon \L \to \R$, by setting $T[X] := T(F_X)$, whenever $X$ has cdf $F_X$. We use the notation $T(\cdot)$, when the risk measure is viewed as a function of cdfs and $T[\cdot]$, when applied to random variables.
We say a law-invariant risk measure $T$ is 
\begin{enumerate}[label = $(\roman*)$]
    \item \textbf{monotone:} if $T[X] \le T[Y]$ whenever $X \le Y $ $\P$-a.s., $X,Y \in \L^\infty$,
    \label{it:monotone}

    \item \textbf{translation invariant:} if $T[X + m] = T[X] + m$, for all $X\in \L^\infty$ and $m\in \R$,
   \label{it:trans-inv}

    \item \textbf{positive homogeneous:} if $T[\lambda \,X] = \lambda \,T[X]$, for all $X\in \L^\infty$ and $\lambda \ge 0$,
    \label{it:pos-hom}
    
    \item \textbf{convex:} 
    \label{it:convex}
    if $T\big[ \lambda\, X + (1-\lambda) \,Y] \le \lambda \,T[X] + (1-\lambda) \,T[Y]$, for all $X,Y \in \L^\infty$ and $\lambda \in [0,1]$,

\end{enumerate}
A law-invariant functional $T$ is called a convex risk measure, if it satisfies the properties \ref{it:monotone}, \ref{it:trans-inv}, and \ref{it:convex}, and a coherent risk measure if it additionally fulfils \ref{it:pos-hom}. For discussions and interpretation of these properties, we refer the reader to \cite{Follmer2002book} and reference therein.

\begin{theorem}[Coherent Risk Measures]
Let $T$ be a 1-elicitable coherent risk measure satisfying $T(0)=0$, and let $S$ be any strictly consistent score for $T$. Then, the optimal coupling of the MK minimisation problem induced by the score $S$ is the comonotonic coupling. 
\end{theorem}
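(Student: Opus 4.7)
The plan is to reduce the general coherent case to the expectile case by invoking a known structural result from the elicitability literature. Specifically, a law-invariant coherent risk measure $T$ that is 1-elicitable and satisfies $T(0)=0$ must essentially be an $\alpha$-expectile for some $\alpha\in[\tfrac12,1)$; this is the content of the characterisation theorem of Ziegel (``Coherence and elicitability'', 2016), also obtained by Bellini and Bignozzi (2015) and Delbaen et al.\ (2016) via the convex level sets property. The translation invariance, positive homogeneity, monotonicity, and convexity axioms, combined with elicitability, force the functional into this one-parameter family (the mean being recovered at $\alpha=\tfrac12$).

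Given this reduction, the remainder of the argument is short. By \Cref{prop:expectiles}, every strictly $\M$-consistent scoring function $S$ for the $\alpha$-expectile that satisfies \cref{asm:score} is necessarily of the form
\begin{equation*}
S(z,y) \;=\; \bigl|\Id_{\{y\le z\}} - \alpha\bigr|\, B_\phi(y,z)\,,
\end{equation*}
for some strictly convex generator $\phi$. \Cref{thm:expectile} then asserts that for every such score the infimum in the MK optimisation problem \eqref{eq:MK-divergence} is attained by the comonotonic coupling. Chaining these facts yields the desired conclusion for any strictly consistent score eliciting $T$.

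The main obstacle is epistemic rather than technical: one must recognise and correctly invoke the characterisation of elicitable law-invariant coherent risk measures as expectiles, which is a deep result whose proof rests on the convex level sets property forced by elicitability together with the coherence axioms. Once this identification is accepted, the optimality of the comonotonic coupling is an immediate corollary of \Cref{thm:expectile}, and no further bespoke computation involving the abstract coherent structure of $T$ is required. In particular, the simultaneous optimality across all strictly consistent scores for $T$ is inherited directly from the corresponding statement for expectile scores.
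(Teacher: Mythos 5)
Your proof is correct and follows essentially the same route as the paper: reduce the 1-elicitable coherent risk measure to an $\alpha$-expectile via the known characterisation result (the paper cites Corollary~12 of Steinwart et al.\ for this, whereas you cite Ziegel, Bellini--Bignozzi, and Delbaen et al., which are equivalent sources for the same fact), and then invoke \Cref{thm:expectile} to conclude comonotonicity is optimal. The paper's proof is a two-line version of your argument; your explicit intermediate appeal to \Cref{prop:expectiles} for the form of the scores is harmless but not strictly needed since \Cref{thm:expectile} already covers all consistent expectile scores.
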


\begin{proof}
The class of 1-elicitable coherent risk measures coincides with the $\alpha$-expectiles where $\alpha \in [0.5,1]$, see Corollary 12 in \cite{Steinwart2014CLT}. Applying \Cref{thm:expectile} concludes the proof.
\end{proof}

Next, we consider convex risk measures, which require additional assumptions on the score.

\begin{definition}\label{def:score-prop}
We consider the following additional properties on scoring functions
\begin{enumerate}[label = $(\roman*)$]
    \item $S(z,y) =0$ if and only if $z = y$
    
    \item  $S(z,y)$ is continuous in both $z$ and $y$,
    
    \item \label{def:score-accuracy}
    $S(z,y)$ is increasing in $z$, for $z > y$ and decreasing in in $z$ for $z< y$
    
    \item for all $z $ in a neighbourhood of 0, $S(z,y) \le \psi(y)$, where $\psi \colon \R \to [1, + \infty)$ is a continuous gauge function.
\end{enumerate}
\end{definition}

Property \ref{def:score-accuracy} is called accuracy rewarding in the statistical literature  \cite{Lambert2008ConfEC}, since it implies that if $T(F)< z_1<z_2$ or $T(F)>z_1 > z_2$, then 
\begin{equation*}
    \int S\big(T(F), \, y \big)\, \d F(y)
    <
    \int S\big(z_1, \, y \big)\, \d F(y)
    <
    \int S\big(z_2, \, y \big)\, \d F(y)\,.
\end{equation*}
Thus, for accuracy rewarding scores the further away the estimates $z$ are from the truth $T(F)$, the larger are their expected scores.

\begin{theorem}[Convex Risk Measures]
Let $T  \colon \M^\infty \to \R$, where $\M^\infty \subseteq \M^0$ is the space of all cdfs with bounded support, be a law-invariant and 1-elicitable convex risk measure. Then, there exists a strictly $\M^\infty$-consistent score for $T$ that satisfies the properties in \Cref{def:score-prop}, and for which the optimal coupling of the MK minimisation problem is the comonotonic coupling. 
\end{theorem}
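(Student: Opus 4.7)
The strategy is to reduce the problem to the shortfall case by showing that $T$ must be a shortfall risk measure $T_\ell$ with convex loss, then invoke \Cref{thm:shorfall}. First, strict consistency of any score $\tilde S$ for $T$ gives $T$ the convex level set (CxLS) property on $\M^\infty$: if $T(F_1)=T(F_2)=t$, then strict consistency yields $\int \tilde S(t,y)\,\d F_i(y)<\int \tilde S(z,y)\,\d F_i(y)$ for each $i$ and all $z\neq t$, so the same strict inequality passes to the mixture $\lambda F_1+(1-\lambda)F_2\in\M^\infty$, whence $T(\lambda F_1+(1-\lambda)F_2)=t$ by \eqref{eq:elicitable-Bayes}. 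Combining CxLS with law-invariance and convexity on the bounded-support class, the classification of Bellini and Bignozzi \cite{BelliniQF2015} forces $T=T_\ell$ for some convex, increasing, non-constant $\ell\colon\R\to\R$ satisfying the sign conditions in \Cref{def:score-shortfall}; convexity of $\ell$ is precisely what makes $T_\ell$ convex.

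With $T=T_\ell$ identified, I take the canonical score $S(z,y):=\int_0^{y-z}\ell(s)\,\d s$ from \eqref{eq:shortfall}, which \Cref{def:score-shortfall} guarantees is strictly $\M^\infty$-consistent for $T_\ell$. I then verify the four items of \Cref{def:score-prop}: $(i)$ non-negativity with $S(z,y)=0$ if and only if $z=y$ follows from the sign conditions on $\ell$ off the origin; $(ii)$ continuity in both arguments follows from the continuity of $\ell$, which is automatic from convexity on $\R$; $(iii)$ the accuracy-rewarding property follows from $\partial_z S(z,y)=-\ell(y-z)$, which is positive for $z>y$ and negative for $z<y$; $(iv)$ a gauge bound is supplied by $\psi(y):=1+\int_0^{|y|+1}|\ell(s)|\,\d s+\int_{-(|y|+1)}^{0}|\ell(s)|\,\d s$, which is continuous on $\R$, takes values in $[1,+\infty)$, and dominates $S(z,y)$ whenever $|z|\le 1$.

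The comonotonic-coupling conclusion is then immediate from \Cref{thm:shorfall}, whose proof uses only the monotonicity of $\ell$ (not convexity) to obtain submodularity of $S$. The main obstacle in this route is the structural step: one must carefully verify that the hypotheses of law-invariance, convexity and $1$-elicitability on $\M^\infty$ trigger the Bellini–Bignozzi representation, in particular that the continuity-from-above assumption used in their classification is either automatic on the bounded-support class or can be extracted from convexity of $T$ combined with the CxLS property, and that the resulting loss $\ell$ really meets the strict sign conditions of \Cref{def:score-shortfall} rather than merely the weaker non-strict form.
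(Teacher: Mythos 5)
Your proof is correct and takes essentially the same route as the paper's: invoke the Bellini--Bignozzi classification (Theorem 4.6 of \cite{BelliniQF2015}) to identify $T$ as a shortfall risk measure $T_\ell$ with convex loss, then apply \Cref{thm:shorfall} to the canonical score \eqref{eq:shortfall}. You are somewhat more thorough than the paper's two-line proof in that you spell out how strict consistency yields the convex-level-set property feeding into the classification, and you explicitly check the four conditions of \Cref{def:score-prop} for the canonical shortfall score (which the paper leaves implicit even though the statement claims them); the structure of the argument is nevertheless identical.
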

\begin{proof}
The class of convex risk measures that are  1-elicitable coincides with the class of shortfall risk measures $T_\ell$ with convex loss function $\ell$, see Theorem 4.6 in \cite{BelliniQF2015}. Applying \Cref{thm:shorfall} concludes the proof.
\end{proof}

\section{Applications}\label{sec:app}
\subsection{Worst-case distortion risk measures}
Worst-case distortion risk measures are often used in robust stochastic optimisation, see e.g., \cite{bernard2022robust} and \cite{Pesenti2023SIAM}. We call $g \colon [0,1] \to [0,1]$  a distortion function, if it is non-decreasing and satisfies $g(0) = 0$ and $g(1) = 1$. A distortion risk measure evaluated at a cdf $G$ with quantile function $\Ginv$ is defined as the Choquet integral
\begin{equation*}
H_g(\Ginv) 
	:= - \int_{-\infty}^0 \left( 1 - g(1 -  G(x))\right) \,\mathrm{d}x + \int_0^{+ \infty}g\big(1 - G(x)\big)\,\mathrm{d}x\,,
\end{equation*}
in which at least one of the two integrals is finite. If $g$ is absolutely continuous, then the distortion risk measure $H_g$ has representation 
\begin{align}
H_g(\Ginv) &= \int_0^1 \gamma(u)\Ginv (u) \,\mathrm{d}u\,, \label{eq: distortion Choquet integral}
\end{align}
with weight function $\gamma(u) := \partial_- g(x)|_{x = 1 - u}, ~ 0 < u < 1$, which satisfies $\int_0^1 \gamma(u)\mathrm{d}u = 1$ and where $\partial_-$ denotes the derivative from the left. In the sequel we assume that representation \eqref{eq: distortion Choquet integral} holds. The class of distortion risk measures is broad and contains the majority of risk measures used in financial risk management practice including the quantile (Value-at-Risk) and the average of upper quantiles (Tail Value-at-Risk).

 Let $\phi$ be a convex function and $F$ be a given reference cdf satisfying $H_g(\Finv)< \infty$, and consider the optimisation problem 
 \begin{equation}
 \label{opt:WC-RMtoto}
     \max_{\Ginv \in\Minv} \; H_g\big(\Ginv\big)
 \,,
     \quad s.t. \quad
     \DB_\phi\big(G, \, F\big) \,\le \,\ep\,,
 \end{equation}
where $\Minv$ is the set of left-continuous quantile functions on $\R$. 
Optimisation problems of the form  \eqref{opt:WC-RMtoto} are of  great interest in financial risk management, as risk measures and the values thereof drive decision making (e.g., through solvency requirements), but typically are subject to distributional uncertainty, here quantified via the BW divergence. Specifically, in the optimisation problem \eqref{opt:WC-RMtoto} one aims to determine the worst-case value a distortion risk measure can attain when the underlying cdf belongs to a set of cdfs that are close, in the BW divergence, to a reference cdf $F$. 
The special case of optimisation problem \eqref{opt:WC-RMtoto} for $\phi(x)=x^2$, i.e. when the BW divergence reduces to the 2-Wasserstein distance, was recently solved in \cite{bernard2022robust}. Here, we significantly extend this result.

\begin{theorem}[Worst-case Distortion Risk Measures]\label{thm:WC-RM-BW}
Assume that the distortion function $g$ is strictly concave and that $\phi$ is strictly convex. If there exists a solution to the optimisation problem \eqref{opt:WC-RMtoto}, then it is uniquely given by
\begin{equation}
\label{eq:solution}
    \Ginv_{\lambda^*}(u)
    := 
    \left(\phi'\right)^{-1}
    \left(
    \phi'\big(\Finv(u)\big) + \frac{1}{\lambda^*}\gamma(u)
    \right)
    \,,
\end{equation}
where $\lambda^* > 0$ is the unique solution to $\DB_\phi\big(G_{\lambda}, \, F\big) = \ep$. 
\end{theorem}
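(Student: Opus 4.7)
My plan is to reduce the problem to a pointwise optimisation over quantile functions via the comonotonic coupling, then exploit Lagrangian duality. Using Theorem~\ref{optimal-BW}, the BW constraint simplifies to
\begin{equation*}
\DB_\phi\big(G, F\big) = \int_0^1 B_\phi\!\big(\Ginv(u), \Finv(u)\big)\, \d u,
\end{equation*}
so combined with the representation \eqref{eq: distortion Choquet integral}, the problem becomes a scalar-valued maximisation over $\Minv$ where both objective and constraint are integrals of pointwise functionals of $\Ginv(u)$.

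Next I would introduce a multiplier $\lambda > 0$ and form the Lagrangian
\begin{equation*}
\mathcal{L}_\lambda(\Ginv) := \int_0^1 \Big[\gamma(u)\Ginv(u) - \lambda\, B_\phi\!\big(\Ginv(u), \Finv(u)\big)\Big]\,\d u + \lambda \ep.
\end{equation*}
For fixed $u$, I would maximise the integrand over $x = \Ginv(u) \in \R$. Using $\partial_x B_\phi(x,y) = \phi'(x) - \phi'(y)$, the first-order condition reads $\gamma(u) = \lambda\big(\phi'(x) - \phi'(\Finv(u))\big)$, yielding the candidate
\begin{equation*}
\Ginv_\lambda(u) = (\phi')^{-1}\!\Big(\phi'(\Finv(u)) + \tfrac{1}{\lambda}\gamma(u)\Big),
\end{equation*}
which matches \eqref{eq:solution}. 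Strict convexity of $\phi$ makes $B_\phi(\cdot,y)$ strictly convex in its first argument, guaranteeing that this pointwise maximiser is the unique one and $-\lambda\phi''<0$ confirms it is a maximum.

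A key verification step is that $\Ginv_\lambda$ is a genuine quantile function. Strict concavity of $g$ means $g'$ is non-increasing, so $\gamma(u) = \partial_- g(1-u)$ is non-decreasing in $u$; combined with $\Finv$ non-decreasing, $\phi'$ strictly increasing, and $(\phi')^{-1}$ strictly increasing, the composition $\Ginv_\lambda$ is non-decreasing (and inherits left-continuity from $\Finv$ and $\gamma$). I would then show that $\lambda \mapsto \DB_\phi(G_\lambda, F)$ is continuous and strictly decreasing on $(0,\infty)$, with limits $+\infty$ as $\lambda \downarrow 0$ and $0$ as $\lambda \to \infty$ (since $\Ginv_\lambda \to \Finv$ pointwise); the intermediate value theorem then yields a unique $\lambda^* > 0$ with $\DB_\phi(G_{\lambda^*}, F) = \ep$, provided a solution to the original problem exists.

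Finally, I would close the duality gap by direct comparison: for any feasible $\Ginv$ (i.e.\ $\DB_\phi(G,F) \le \ep = \DB_\phi(G_{\lambda^*}, F)$),
\begin{equation*}
H_g(\Ginv) \,\le\, H_g(\Ginv) + \lambda^*\big(\ep - \DB_\phi(G, F)\big) \,=\, \mathcal{L}_{\lambda^*}(\Ginv) \,\le\, \mathcal{L}_{\lambda^*}(\Ginv_{\lambda^*}) \,=\, H_g(\Ginv_{\lambda^*}),
\end{equation*}
where the middle inequality is pointwise optimality of $\Ginv_{\lambda^*}$ and the last equality uses $\DB_\phi(G_{\lambda^*}, F) = \ep$. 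Uniqueness of the maximiser follows from the strict pointwise concavity of the Lagrangian integrand. The hardest parts will be (i) justifying that the pointwise candidate is well-defined globally (i.e.\ that $\phi'(\Finv(u)) + \gamma(u)/\lambda$ lies in the range of $\phi'$), and (ii) the monotonicity/continuity of $\lambda \mapsto \DB_\phi(G_\lambda, F)$, especially when $\Finv$ is unbounded; both likely need a mild integrability hypothesis built into the assumption that a solution to \eqref{opt:WC-RMtoto} exists.
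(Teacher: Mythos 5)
Your proof takes essentially the same route as the paper: reduce the BW constraint to a pointwise integral via the comonotonic coupling, introduce a Lagrangian, maximise the integrand pointwise to obtain the candidate $\Ginv_\lambda$, verify it is a genuine (non-decreasing) quantile function using the monotonicity of $\gamma$ and $\phi'$, and then tune $\lambda$ so the constraint is active. The one substantive improvement in your write-up is the final weak-duality chain
\[
H_g(\Ginv) \le \mathcal{L}_{\lambda^*}(\Ginv) \le \mathcal{L}_{\lambda^*}(\Ginv_{\lambda^*}) = H_g(\Ginv_{\lambda^*}),
\]
which cleanly establishes \emph{sufficiency} of $\Ginv_{\lambda^*}$; the paper's proof instead asserts that any solution ``must solve for all $\lambda>0$'' the Lagrangian problem, a necessity-type claim that is stated without detailed justification. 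Your version is more self-contained on this point. The two potential gaps you flag -- that $\phi'(\Finv(u)) + \gamma(u)/\lambda$ must lie in the range of $\phi'$, and that $\lambda \mapsto \DB_\phi(G_\lambda,F)$ is continuous with the claimed limits -- are exactly the same issues the paper leaves implicit (it asserts these limits and continuity without proof), so you have correctly identified where the remaining technical work lies rather than missing an idea.
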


\begin{proof}
A solution to problem \eqref{opt:WC-RMtoto} must attain a BW divergence $\ep_0$, $0\leq \ep_0 \leq \ep$, and thus solves for all $\lambda >0,$ the optimisation problem  
\begin{align*}
   \argmax_{\Ginv \in \Minv} \;
    \int_0^1  \Big(\Ginv(u)\gamma(u) - \lambda \big(\phi\big(\Ginv(u)\big)-\phi\big(\Finv(u)\big)
    -\phi'\big(\Finv(u)\big)\Ginv(u) \ + \
     \phi'\big(\Finv(u)\big)\Finv(u)
    - \ep_0\big)\Big)\mathrm{d}u.
\end{align*}
Equivalently, the solution solves for all $\lambda > 0$  the optimisation problem 
\begin{equation*}
    \argmin_{\Ginv \in \Minv} \;  
        \int_0^1 
     \big(\phi\big(\Ginv(u)\big)
    - k_\lambda(u)\Ginv(u)\big)\mathrm{d}u\,,
\end{equation*}
where $k_\lambda(u):= \phi'\big(\Finv(u)\big)
     + \frac{1}{\lambda} \gamma(u).$
    Note that for any given function  $k_{\lambda}(\cdot), \lambda>0$, 
    the function $\phi\big(\Ginv(u)\big)- k_\lambda(u) \Ginv(u)$ is convex in $\Ginv(u)$. By direct optimisation and noting that $k_\lambda(u)$ is an increasing function, we thus obtain that a solution to \eqref{opt:WC-RMtoto} is of the form 
\begin{equation}
\label{eq:help}
    \Ginv_{\lambda}(u)
    := 
    \left(\phi'\right)^{-1}
    \left(
    \phi'\big(\Finv(u)\big) + \frac{1}{\lambda}\gamma(u)
    \right)
    \,,
\end{equation}
for some $\lambda >0$.
Furthermore, $\lambda_1 < \lambda_2$ implies that 
    $\Ginv_{\lambda_1}(u) > \Ginv_{\lambda_2}(u)$. As a consequence,  it also holds that $H_g(\Ginv_{\lambda_1}) > H_g(\Ginv_{\lambda_2})$ and moreover that (recall that $\phi$ is convex) $\DB_\phi( G_{\lambda_1},F) > \DB_\phi(G_{\lambda_2},F)$. Hence, for a solution to be optimal, $\lambda^*$ must satisfy $\DB_\phi(G_{\lambda^*},F)=\ep$. The existence of $\lambda^*$ follows since $\DB_\phi( G_{\lambda},F)$ is continuously decreasing in $\lambda >0$,  $\lim_{\lambda \to 0 } {\DB_\phi( G_{\lambda},F)} = \infty$, and $\lim_{\lambda \to \infty } {\DB_\phi( G_{\lambda},F)} = 0$. Moreover, $\lambda^*$ is unique. 
\end{proof}

\subsection{Cheapest payoffs}
Let $S_{T}$ represents the non-negative random value of a risky asset at time $T>0.$ 
Further consider a bank account earning the continuously
compounded risk-free interest rate $r\in\mathbb{R}$. 
We define the set of payoffs
\begin{equation*}
    \mathcal{X}:=\Big\{ g(S_{T}) ~|~\,g:\R_+\to\R_+\text{ is measurable and }  \E_{\mathbb{Q}}\big[|g(S_{T})|\big]<\infty    \Big\}\,,
\end{equation*}
where $\Q$ is a pricing
measure equivalent to $\P$, and $\E_\Q[\cdot]$ denotes the expectation under $\Q$. For any payoff $X\in\mathcal{X}$, its initial cost is 
\begin{equation}
\label{cost}
  c(X):=e^{-rT}\E_{\Q}[X]
  =
  \E_\P[\xi X]\,,
\end{equation}
where the last equality follows by defining the state-price-density $\xi :=e^{-rT}\frac{d \Q}{d\P}$. We assume that the state-price density $\xi$ is continuously distributed and denote its cdf under $\P$ by $F_\xi$.
In the sequel, all cdfs of random variables are taken with respect to $\mathbb{P}$. 

Let $\phi$ be a convex function and $F$ be a cdf $F$ satisfying $\E_\P[\Finv_\xi(U)\Finv(1-U)]< \infty$, for a uniform random variable $U$. We consider the problem of finding the cheapest payoff under the constraint that its cdf is close to a benchmark. Specifically, we require that its cdf lies withing a BW-ball around a benchmark cdf $F$, that is we consider the optimisation problem
\begin{equation} \label{eq:cost_eff_sproblem_class2}      \min_{X\in\mathcal{X}} \; c(X)
 \,,
      \quad s.t. \quad
     \DB_\phi\big(F_X, \, F\big) \,\le \,\ep\,.
\end{equation}
A special case of \eqref{eq:cost_eff_sproblem_class2} was  first considered in \cite{dybvig1988inefficient} who, instead of the BW divergence, consider payoffs with fixed cdf $F$. That is, they solve
\begin{equation}
 \label{opt:WC-RMtoto3}
     \min_{X\in\mathcal{X}} \; c(X)
 \,,
     \quad s.t. \quad
     F_{X}\equiv F\,,
 \end{equation}
and show that the unique solution to optimisation problem \eqref{opt:WC-RMtoto3} is given by 
$X^*:=\Finv \big(1-F_{\xi}(\xi)\big)$.    
The payoff $X^*$ is called cost-efficient as it is decreasing in $\xi$, and thus yields the cheapest payoff with given cdf.  It is well understood that an investor who aims to maximize an increasing law-invariant objective (e.g., an expected utility) necessarily purchases a cost-efficient payoff (\cite{Follmer2002book}, \cite{carlier2006law}, \cite{ruschendorf2020construction}). This has led to the quantile formulation of optimal payoff choice selection; see e.g., \cite{he2011portfolio}, \cite{bernard2015optimal}, \cite{ruschendorf2020construction}, \cite{boudt2022optimal}, and \cite{Pesenti2023SIAM} for studies and applications. Next, we provide the solution to optimisation problem \eqref{eq:cost_eff_sproblem_class2}.

\begin{theorem}[Cheapest payoffs]
The solution to  optimisation problem \eqref{eq:cost_eff_sproblem_class2} is given 
by 
\begin{equation}
X^*:={\Ginv_{\lambda^*}}\big(1-F_{\xi}(\xi)\big)\,,
\label{cheap}
\end{equation}
in which
\begin{equation}
\label{eq:solution2}
    \Ginv_{\lambda^*}(u)
    := 
    \left(\phi'\right)^{-1}
    \left(
    \phi'\big(\Finv(u)\big) - \frac{1}{\lambda^*}\,{\Fxiinv}(1-u)
    \right)
    \,,
\end{equation}
where $\lambda^* > 0$ is the unique solution to $\DB_\phi\big(G_{\lambda^*}, \, F\big) = \ep$. \end{theorem}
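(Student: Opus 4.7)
The plan is to reduce \eqref{eq:cost_eff_sproblem_class2} to a problem over quantile functions via a Dybvig-type anti-comonotone argument, and then solve the resulting convex programme by a pointwise Lagrangian computation that mirrors the proof of \Cref{thm:WC-RM-BW}.

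First, for any $X\in\mathcal{X}$ with cdf $F_X$, since $\xi$ is continuously distributed, the Hardy--Littlewood rearrangement inequality yields
\begin{equation*}
c(X) \;=\; \E_\P[\xi X] \;\ge\; \int_0^1 \Fxiinv(1-u)\,\Finv_X(u)\,\d u,
\end{equation*}
with equality if and only if $(X,\xi)$ is anti-comonotone, i.e.\ $X = \Finv_X\bigl(1-F_\xi(\xi)\bigr)$. Hence, for any candidate cdf $G$, the cheapest payoff with $F_X=G$ is $\Ginv\bigl(1-F_\xi(\xi)\bigr)$, and the original problem collapses to
\begin{equation*}
\min_{\Ginv\in\Minv}\;\int_0^1\Fxiinv(1-u)\,\Ginv(u)\,\d u \quad\text{s.t.}\quad \DB_\phi(G,F)\le\ep.
\end{equation*}
By \Cref{optimal-BW}, the constraint rewrites as $\DB_\phi(G,F)=\int_0^1 B_\phi\bigl(\Ginv(u),\Finv(u)\bigr)\,\d u$, so the whole programme becomes one-dimensional in $u$.

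Next, I would form the Lagrangian with multiplier $\lambda>0$ and minimise the integrand pointwise in $u$ (temporarily ignoring monotonicity). The first-order condition is
\begin{equation*}
\Fxiinv(1-u) + \lambda\bigl[\phi'(\Ginv(u))-\phi'(\Finv(u))\bigr] = 0,
\end{equation*}
which inverts to $\Ginv_\lambda(u) = (\phi')^{-1}\!\bigl(\phi'(\Finv(u)) - \Fxiinv(1-u)/\lambda\bigr)$, matching \eqref{eq:solution2}; the integrand is convex in $\Ginv(u)$, so this is a global pointwise minimiser. I then check monotonicity of $\Ginv_\lambda$ in $u$ directly: $\phi'\circ\Finv$ is non-decreasing, $-\Fxiinv(1-\cdot)$ is non-decreasing (as $\xi\ge 0$), and $(\phi')^{-1}$ is non-decreasing, so $\Ginv_\lambda\in\Minv$ and the unconstrained pointwise solution is automatically a feasible quantile function.

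Finally, exactly as in the proof of \Cref{thm:WC-RM-BW}, $\lambda_1<\lambda_2$ yields $\Ginv_{\lambda_1}(u)\le\Ginv_{\lambda_2}(u)$ pointwise, so the cost $\int_0^1 \Fxiinv(1-u)\Ginv_\lambda(u)\,\d u$ is strictly increasing and the BW divergence $\DB_\phi(G_\lambda,F)$ is strictly decreasing in $\lambda$, with $\lim_{\lambda\to 0^+}\DB_\phi(G_\lambda,F)=\infty$ and $\lim_{\lambda\to\infty}\DB_\phi(G_\lambda,F)=0$. Hence there exists a unique $\lambda^*>0$ with $\DB_\phi(G_{\lambda^*},F)=\ep$, and this is the optimum because the constraint must bind: any slack would permit a strictly smaller $\lambda$, yielding a strictly cheaper feasible payoff. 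The main obstacle is the justification of the pointwise Lagrangian step over the constrained class $\Minv$; fortunately the sign of $-\Fxiinv(1-\cdot)$ saves us, making the unconstrained pointwise solution monotone without any isotonic rearrangement.
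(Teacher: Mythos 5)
Your proof is correct and follows essentially the same route as the paper: reduce to a problem over quantile functions via the anti-comonotone (Hardy--Littlewood) argument, identify $\gamma(u):=-\Fxiinv(1-u)$ as an increasing (negative) weight, and solve the resulting one-dimensional convex programme. The only difference in presentation is that the paper, at this point, simply invokes \Cref{thm:WC-RM-BW} (noting its proof survives for increasing but negative $\gamma$), whereas you re-derive the pointwise first-order condition and monotonicity of $\Ginv_\lambda$ explicitly; both lead to the same formula \eqref{eq:solution2}.
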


\begin{proof}
We first show that any solution has to be cost-efficient. To this regard, let $X$ be a solution to  optimisation problem \eqref{eq:cost_eff_sproblem_class2} with cdf $G$. Define the random variable $Y:=\Ginv\big(1-F_\xi (\xi)\big)$ which has cdf $G$, and thus its cdf lies within the BW-ball. Moreover, unless $Y=X$, $\P$-a.s., $Y$ is strictly cheaper than $X$. Hence, an  optimal solution has to be cost-efficient. Furthermore, from \eqref{cost} it follows that the cost of any cost-efficient payoff $Y$ with cdf $G$ can be expressed as 
\begin{equation}
\label{CEcost}
c(Y)=-\int_0^1 \gamma(u)\, \Ginv(u)\, \mathrm{d}u\,,
\end{equation}
where $\gamma(u):=-{\Fxiinv}(1-u)$ is an increasing function. We observe that solving optimisation problem  \eqref{eq:cost_eff_sproblem_class2} is equivalent to solving an optimisation problem of the form \eqref{opt:WC-RMtoto}. Next, we apply \Cref{thm:WC-RM-BW}, as an inspection of its proof reveals that \Cref{thm:WC-RM-BW} also holds when the (increasing) weighting function $\gamma$ is negative. 
\end{proof}

\bibliographystyle{siamplain}
\bibliography{references}

\end{document}